\documentclass{article}

\usepackage[utf8]{inputenc}
\usepackage{lmodern}
\usepackage{microtype}
\usepackage{textcomp}

\usepackage[table]{xcolor}
\definecolor{myblue}{RGB}{0,100,200}
\definecolor{myred}{rgb}{0.8, 0.1, 0.1}

\usepackage{amsmath,amssymb,amsfonts}
\usepackage{mathtools}
\usepackage{amsthm}
\usepackage{mathrsfs}

\usepackage{graphicx}
\usepackage{subcaption}
\usepackage{booktabs}
\usepackage{array}
\usepackage{colortbl}
\usepackage{tabularx}
\usepackage{multirow}
\usepackage{makecell}

\usepackage{enumitem}
\setlist[itemize]{noitemsep, topsep=0pt}

\usepackage{algorithm}
\usepackage{algorithmic}

\usepackage{listings}
\usepackage{fancyvrb}

\lstdefinelanguage{Coq}{
    morekeywords={Definition, Inductive, Fixpoint, forall, exists, Prop, bool, list, string, match, with, end, if, then, else, return, cases, unfold, simpl, reflexivity, by, intro, apply, and, let, in, Some, None},
    sensitive=true,
    morecomment=[s]{(*}{*)},
    morestring=[b]",
    morekeywords=[2]{->, <-, <->, =>, /\, \/, ~},
    alsoletter={<, >, -, =, /\, \/},
}

\usepackage{tikz}
\usepackage{pgfplots}
\pgfplotsset{compat=1.18}

\usepackage[most]{tcolorbox}
\tcbuselibrary{breakable}

\newtcolorbox{promptbox}[1]{
  breakable,
  colback=gray!3,
  colframe=gray!60,
  title=\textbf{#1},
  fonttitle=\bfseries,
  left=1mm, right=1mm, top=1mm, bottom=1mm
}

\usepackage{pifont}

\usepackage{hyperref}
\usepackage{url}

\usepackage[accepted]{icml2026}

\usepackage[capitalize,noabbrev]{cleveref}

\theoremstyle{plain}
\newtheorem{theorem}{Theorem}[section]

\theoremstyle{definition}
\newtheorem{definition}[theorem]{Definition}

\theoremstyle{remark}

\newcommand{\ours}{\textsc{Coins}}

\newcommand{\oomit}[1]{}

\usepackage[disable,textsize=tiny]{todonotes}

\icmltitlerunning{How Powerful Are LLMs in Generating Formal
Program Specifications?}

\begin{document}

\twocolumn[
\icmltitle{
How Powerful Are LLMs in Generating Formal Program Specifications?
}



  \icmlsetsymbol{equal}{*}

  \begin{icmlauthorlist}
    \icmlauthor{Fanpeng Yang}{equal,ios,ucas}
    \icmlauthor{Xing Li}{equal,ios,ucas}
    \icmlauthor{Shuling Wang}{ios,ucas}
    \icmlauthor{Jie An}{ios,ucas}
    \icmlauthor{Zeyu Sun}{ios}
    \icmlauthor{Shenghua Feng}{ios}
    \icmlauthor{Wenhan Wang}{ios}
    \icmlauthor{Weiyi Wang}{ios}
    \icmlauthor{Naijun Zhan}{pku,zgc}
    \icmlauthor{Fanjiang Xu}{ios,ucas}
  \end{icmlauthorlist}

  \icmlaffiliation{ios}{Institute of Software, Chinese Academy of Sciences, Beijing, China}
  \icmlaffiliation{ucas}{University of the Chinese Academy of Sciences, Beijing, China}
  \icmlaffiliation{pku}{School of Computer Science \& Key Lab. of High Confidence Software Technologies
of Ministry of Education, Peking University, Beijing, China}
  \icmlaffiliation{zgc}{Zhongguancun Lab., Beijing, China}

  \icmlcorrespondingauthor{Jie An}{anjie@iscas.ac.cn}
  \icmlcorrespondingauthor{Zeyu Sun}{zuyu.zys@gmail.com}
  \icmlcorrespondingauthor{Shenghua Feng}{fengshenghua@iscas.ac.cn}

  \icmlkeywords{Machine Learning, ICML}

  \vskip 0.3in
]



\printAffiliationsAndNotice{\icmlEqualContribution}

\begin{abstract}

Formal verification provides strong guarantees of software correctness, but its adoption is limited by the high cost of writing precise formal specifications. While recent large language models (LLMs) have shown strong capabilities in theorem proving and verified code generation, their true ability to generate program specifications remains unclear. Existing evaluations require either verifying implementation conformance or proving semantic equivalence between specifications, both of which are formidably difficult and may conflate proof difficulty with specification quality. To address this problem, we introduce \textsc{Coins}, a Rocq based evaluation framework that assesses specification quality by instantiating specifications under evaluation on trusted test cases and generating concrete proof obligations. This design aligns with the asymmetric nature of formal reasoning, where successful proofs provide reliable evidence while proof failures are inherently ambiguous. Using \textsc{Coins}, we conduct a large scale study on HumanEval with a curated set of human written Rocq specifications. Our results show that specification generation remains a formidable challenge, and that verification complexity can obscure genuine differences in specification quality. Overall, we find that accurate specification evaluation, rather than model scaling alone, is central to understanding the power of LLMs for specification synthesis, and that test case based formal reasoning offers a more faithful and discriminative measure of progress.

\end{abstract}

\section{Introduction}

Formal verification provides the strongest guarantees of software correctness by enabling machine-checkable proofs that an implementation satisfies its intended behavior. Interactive theorem provers (ITPs) such as Rocq~\cite{huet1997coq}, Lean~\cite{de2015lean}, and Isabelle~\cite{paulson1994isabelle} have been successfully applied to the verification of large-scale, safety-critical systems, including certified compilers and operating system kernels. Despite these successes, the practical adoption of formal verification remains fundamentally constrained by the cost of writing formal specifications. Constructing specifications that are both semantically precise and sufficiently restrictive to rule out unintended behaviors requires expert knowledge and substantial manual effort, often exceeding the effort required to verify the implementation itself.

Recent advances in large language models (LLMs) have demonstrated that neural models can meaningfully assist formal reasoning tasks. In particular, LLMs have achieved notable success in formal mathematical theorem proving~\cite{xin2024deepseek,wang2024lego,xin2025deepseek,ren2025deepseek,chen2025seed,lin2025goedel,wang2025kimina,hubert2025olympiad}, where models generate or complete proofs that are mechanically checked by proof assistants. These results indicate that LLMs can internalize non-trivial logical structure and interact effectively with formal languages. Building on this progress, recent work has begun to transfer these capabilities to program verification, where LLMs are used for tasks such as invariant inference~\cite{kamath2023finding,chakraborty2023ranking,pirzada2024llm,cao2025clause2inv}, translating natural language into formal specifications~\cite{cosler2023nl2spec,endres2024can,cao2025informal,fang2025enhancing}, and program synthesis~\cite{first2023baldur,misu2024towards}. This progression naturally raises a further question: \emph{how powerful are LLMs at generating formal program specifications?} Unlike theorem proving or implementation synthesis, specification generation requires translating informal intent---expressed through natural language descriptions or reference implementations---into precise logical predicates that characterize program behavior.


A central challenge in answering this question lies in how to evaluate specification quality at scale. Relying solely on human experts is not only inefficient but also cannot guarantee correctness, as even experts may overlook subtle semantic errors. Most existing work in program verification evaluates specifications indirectly by checking whether a given implementation satisfies them. In these work, specifications take the form of verification objectives, or error-checking conditions, and are designed primarily to discharge a specific verification goal rather than to fully characterize program behavior. While effective for validating implementations against targeted properties or ruling out particular classes of errors, this paradigm provides only a partial view of specification quality: a specification may be sufficient to verify a particular implementation or objective yet remain overly permissive, failing to rule out unintended behaviors or alternative incorrect implementations. In other words, traditional verification pipelines are fundamentally oriented toward establishing $\textit{implementation} \models \textit{specification}$, but offer limited ability to assess whether a specification itself precisely captures the intended semantics of the program.

Recent benchmarks have attempted to strengthen specification evaluation by imposing stricter correctness criteria. \textsc{Clever}~\cite{thakur2025clever}, for example, proposes an end-to-end benchmark that requires models to generate specifications together with implementations and proofs of equivalence against a claimed ground-truth specification, all expressed within a theorem-proving language. While conceptually appealing, this design introduces substantial practical challenges. 

First, while equivalence proofs can in principle avoid over-approximation, they reintroduce dependence on human-curated ground-truth specifications—resources that demand substantial effort to produce and may themselves introduce bias or inconsistency. Second, even verifying that a generated implementation satisfies a generated specification in a fully formal setting is already highly challenging. Requiring, in addition, full equivalence proofs between non-trivial specifications further increases the proof burden, resulting in very low success rates even for state-of-the-art models. Third, because specification generation, implementation generation, and formal verification are tightly coupled in a single pipeline, failures are difficult to diagnose: it is often unclear whether an unsuccessful result reflects an incorrect specification, an incorrect implementation, or limitations of the proving process itself. As a result, although \textsc{Clever} enforces a strong notion of correctness, its evaluation signal is sparse and hard to interpret, limiting its usefulness for assessing specification quality in isolation.

These limitations suggest the need for an evaluation methodology that can assess specification quality within a tractable and interpretable formal reasoning framework. In this work, we argue that specification evaluation should focus on what can be reliably established through formal proofs rather than relying solely on all-or-nothing semantic equivalence checking.

Motivated by these considerations, we propose \textsc{Coins} (\textbf{CO}q-based \textbf{IN}stantiated \textbf{S}pecification evaluation), an evaluation framework for assessing LLM-generated formal specifications using the Rocq proof assistant. The name reflects the core idea of our approach: specifications are evaluated by instantiating them on concrete behaviors and reasoning about the resulting proof obligations in Rocq. We adopt Rocq because it is one of the most widely used proof assistants for program verification~\cite{leroy2016compcert,cao2018vst,zhou2024vst,wu2025qcp}, with a long history of successful applications in large-scale verified systems such as CompCert~\cite{leroy2016compcert}. We chose Rocq over Lean because Rocq has a longer and more established track record in program verification, making it a more natural fit for evaluating program specifications.

Using \textsc{Coins}, we conduct a large-scale empirical study of formal specification generation in Rocq on the HumanEval benchmark~\cite{chen2021evaluating}. All 164 ground-truth specifications were manually written and cross-reviewed by researchers with multiple years of theorem-proving experience, requiring several person-weeks of expert effort; to the best of our knowledge, this is the first complete formal specification suite for HumanEval in Rocq. We use this suite as a reference baseline and evaluate a diverse set of state-of-the-art LLMs under a unified evaluation pipeline. Our evaluation clearly differentiates the specification generation capabilities of different models, revealing substantial performance gaps: \textsc{Coins} scores range from 28.05\% for Gemini~3 Pro Preview to 1.22\% for DeepSeek-V3.1. While frontier models can approach human-level performance on some tasks, generating tight formal specifications remains a significant challenge.


Our main contributions are as follows:
\begin{itemize}
  \item We identify a fundamental limitation of existing program verification pipelines: while effective at checking whether implementations satisfy specifications, they provide limited ability to assess specification quality itself.
  \item We propose \textsc{Coins}, a principled evaluation framework for specification quality based on formally provable behavior in Rocq.
  \item We construct a curated dataset of human-written Rocq specifications for all 164 HumanEval problems and perform a comprehensive empirical evaluation of state-of-the-art LLMs.
\end{itemize}

\section{Program Specification}
\label{sec:motivation}

We first clarify what we mean by a \emph{formal program specification}
and why evaluating it differs from evaluating implementations.

Consider the \texttt{sort\_third} problem from HumanEval:
given a list, return a modified list in which elements at indices
divisible by three are sorted, while all other elements remain
in their original positions.
Correct implementations may use different algorithms, data structures,
or iteration styles.

A specification abstracts away all such details.
It is a logical proposition (\texttt{Prop}) over \emph{only} the input
and output, capturing the relationship any correct implementation must satisfy.
It should be independent of any particular implementation
and admit no over-approximation on the input--output relation.


Concretely, the specifications we consider consist of two parts: a \emph{precondition} characterizing properties the input must satisfy, and a \emph{postcondition} characterizing the relation the input and output must jointly satisfy. By design, such specifications are agnostic to both the implementation language and the implementation itself, making no statement about runtime exceptions, memory layout, or other operational artifacts. While such properties may legitimately belong in language-specific specifications (e.g., a C-level memory contract), these lie outside the scope of this work, which targets language- and implementation-agnostic specifications of functional behaviour.

For \texttt{sort\_third}, this can be written in Rocq as:

\begin{lstlisting}[basicstyle=\ttfamily\scriptsize]
Definition sort_third_spec
  (input output : list Z) : Prop :=
  Permutation input output /\
  (forall (i : nat),
    (i < length input)%nat ->
    (i mod 3 <> 0) ->
    nth i output 0%Z = nth i input 0%Z) /\
  (forall (i j : nat),
    i < length output /\
    j < length output /\
    i mod 3 = 0 /\
    j mod 3 = 0 /\
    i < j ->
    (nth i output 0 <= nth j output 0)%Z).
\end{lstlisting}

\noindent
This specification states three conditions:
the output is a permutation of the input,
non-divisible-by-three positions are preserved,
and divisible-by-three positions are sorted in non-decreasing order.
Any correct implementation satisfies this specification,
and any incorrect one violates at least one condition.

\noindent
This differs from program translation: mechanically translating Python into
Rocq would encode algorithm-specific details and remain tied to one
implementation. A specification, by contrast, captures only what the user
cares about---for algorithmic problems, functional correctness rather than
\emph{how} the result was computed.

Under this definition, a specification is not a problem with a single
canonical answer. For functional correctness, test cases provide the most
natural formalization of user concerns: positive cases encode accepted
behaviors and negative cases encode rejected ones, and proving on concrete
test cases is substantially easier than proving full semantic equivalence,
allowing us to isolate specification quality from prover capability.

This observation directly motivates \textsc{Coins}, which evaluates a
specification by instantiating it on concrete test cases and checking
whether the resulting proof obligations can be discharged. For the positive
test case \lstinline|([2;1;3;7;8;9;10], [2;1;3;7;8;9;10])|, \textsc{Coins}
asks the LLM to prove:

\begin{lstlisting}[basicstyle=\ttfamily\scriptsize]
Example sort_third_pos :
  sort_third_spec [2;1;3;7;8;9;10]
                  [2;1;3;7;8;9;10].
Proof. unfold sort_third_spec. ... Qed.
\end{lstlisting}

\noindent
For a mutated negative case such as
\lstinline|([2;1;3;7;8;9;10], [7;1;3;2;8;9;10])|, the specification should
\emph{not} be provable; if the LLM constructs a proof, the specification is
overly permissive and fails. The object evaluated is always the
specification itself, not the implementation. We describe the full
evaluation pipeline below.

\section{Evaluation Framework}

\label{sec:framework}
\begin{figure}
    \centering
    \includegraphics[width=0.48\textwidth]{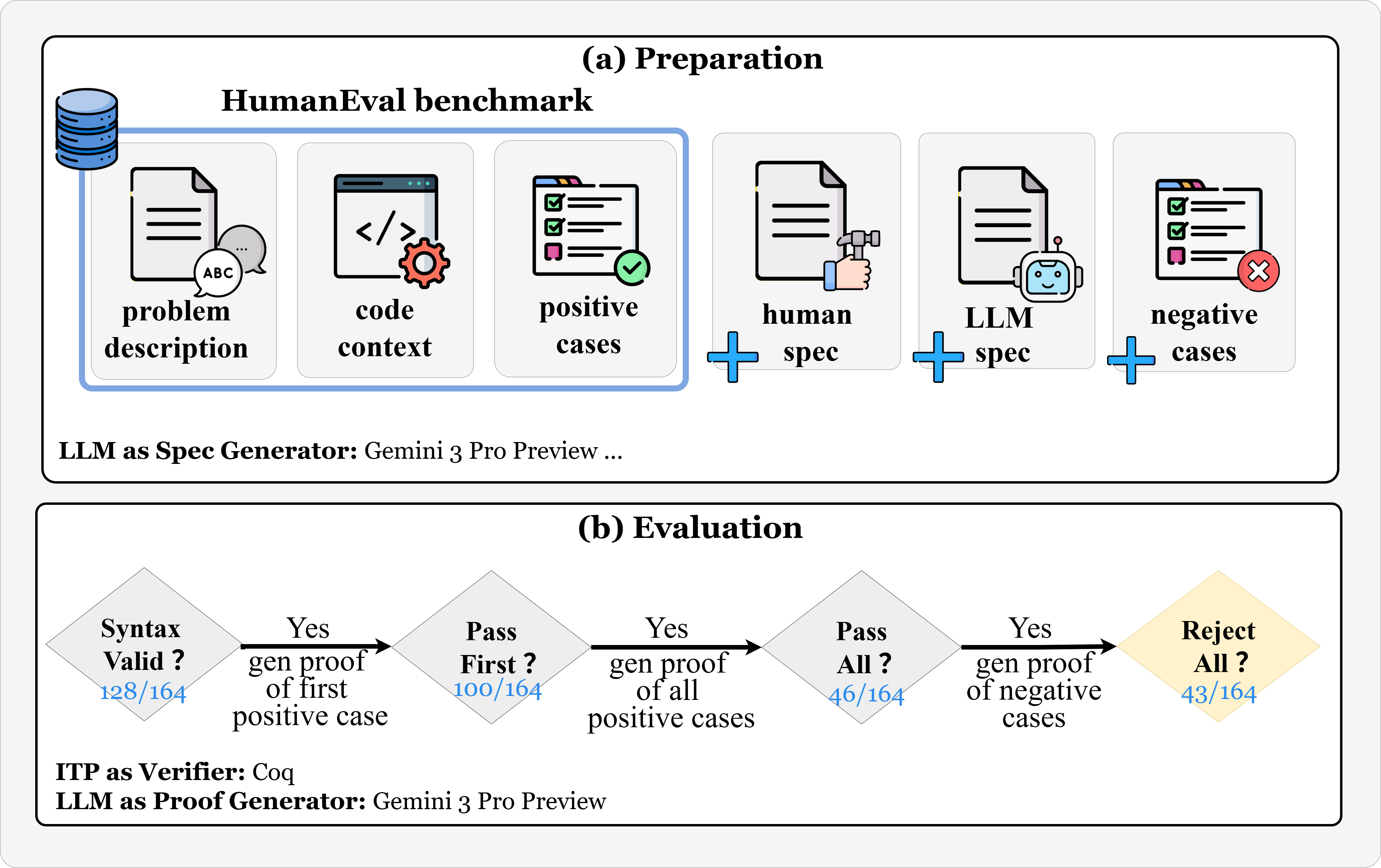}
    \caption{\textbf{The \textsc{Coins} evaluation framework.} (a) Preparation: inputs from HumanEval and generated specifications. (b) Evaluation: multi-stage filtering pipeline. Annotated numbers show filtering results for Gemini 3 Pro Preview (164 $\rightarrow$ 43 candidates).}
    \label{fig:framework}
\end{figure}



To evaluate program specifications written in Rocq, we instantiate each specification with concrete input-output values from the test cases and discharge the induced verification obligations. A specification is deemed to accept a test case if and only if the corresponding verification obligation can be successfully constructed and verified by the Rocq proof assistant. We demonstrate our approach using the HumanEval benchmark.

Figure~\ref{fig:framework} illustrates our evaluation pipeline, organized into two phases. In the \textbf{Preparation Phase} (Figure~\ref{fig:framework}a), we manually craft reference Rocq specifications for HumanEval problems, prompt LLMs to produce specifications following the same signature, and generate negative cases through mutation testing to supplement the original positive test cases. In the \textbf{Evaluation Phase} (Figure~\ref{fig:framework}b), we progressively filter specifications through a multi-stage pipeline—syntactic validation, positive test acceptance, negative-case rejection, and equivalence verification—by constructing acceptance proofs at each stage.

\subsection{Preparation Phase}

The preparation phase establishes the evaluation infrastructure through three key components: (1) human specifications that serve as the ground-truth, (2) LLM prompts for eliciting LLM-generated specifications, and (3) test suites combining both positive and negative cases to assess specification behavior.

\textbf{Human Specifications.}
We manually constructed ground-truth Rocq specifications for all 164 problems in the HumanEval dataset. Each specification takes the form of a \texttt{Prop} that defines a logical predicate over input and output, following a unified signature. To ensure semantic correctness, each specification underwent rigorous review by experts in formal verification. These human-authored specifications serve a critical baseline in our framework.

\textbf{LLM Prompting Strategy.}
We design prompts that provide LLMs with the natural language description and reference implementation from each HumanEval problem, along with explicit instructions to generate Rocq specifications adhering to the same unified signature as our reference specifications. This signature constraint ensures structural consistency between LLM-generated and human-authored specifications, enabling direct comparison in subsequent evaluation stages. The actual specification generation occurs in the evaluation phase, where we query multiple LLMs using these prepared prompts.

\textbf{Test Suite Construction.}
To evaluate specification behavior, we construct a dual test suite comprising both positive and negative test cases, providing complementary coverage of correct and incorrect program behaviors.

\textit{Positive Test Cases.} 
 We leverage positive test cases directly from the HumanEval+ benchmark, which augments each HumanEval problem with extensive automatically-generated input-output pairs, averaging 755.98 positive test cases per problem. These cases represent intended program behaviors and provide broad coverage of the specification's acceptance criteria.


\textit{Negative Test Cases.}
\label{sec:neg-cases}
To identify overly permissive specifications, we generate negative test cases via mutation testing on the canonical HumanEval implementations. We apply operator-level mutations and collect an input-output pair as a negative case whenever a mutant’s output diverges from the ground-truth behavior. This procedure produces 10 negative test cases per problem, for a total of 1{,}640 negative cases across all 164 HumanEval tasks. Detailed mutation operators and generation procedures are provided in the Appendix~\ref{appendix:nega}.

\subsection{Evaluation Phase}

In the evaluation phase, we first prompt LLMs to generate Rocq specifications using the prepared prompts from the preparation phase. The generated specifications then undergo a multi-stage filtering pipeline (Figure~\ref{fig:framework}b), where each stage incrementally accumulates evidence about specification correctness and precision. The detailed design principles of this pipeline, along with an analysis of each stage, are provided in Appendix~\ref{appendix:theory}.

\textbf{Syntactic Validation (\textsc{Syntax}).}
We first check whether each generated specification is syntactically well-formed and successfully compiled in Rocq. Specifications that fail this stage are discarded without further semantic evaluation.

\textbf{First-Case  Acceptance (\textsc{Pass}$_{\text{first}}$).} For syntactically valid specifications, we attempt to discharge the proof obligation corresponding to the first positive test case. This early filtering step reduces computational overhead in later stages.

\textbf{Full Positive Test Acceptance (\textsc{Pass}$_{\text{all}}$).}
Specifications that pass the first case are evaluated against all positive test cases. We reuse the proof constructed for the first case as a reference, as successful proofs often share structural patterns across test instances. Failure to discharge some test obligations is treated as \emph{inconclusive}, as it may arise either from an incorrect specification or from limitations in automated proof synthesis.


\textbf{Negative Test Rejection (\textsc{Reject}$_{\text{all}}$).}
Passing all positive test cases alone is insufficient to rule out overly permissive specifications. 
We evaluate specifications against automatically generated negative test cases. 
For each negative case, we prompt the LLM to judge whether the specification accepts it; if so, the LLM attempts to construct a formal acceptance proof, where success indicates the specification is overly permissive.
If the LLM judges rejection, we skip the formal proof as proving rejection can be harder.
A specification passes \textsc{Reject}$_{\text{all}}$ only if no negative example can be proven acceptable.
\textsc{Reject}$_{\text{all}}$ serves as our core metric: only specifications satisfying both \textsc{Pass}$_{\text{all}}$ and \textsc{Reject}$_{\text{all}}$ are designated as \textbf{candidate specifications}.

In practice, candidate specifications serve as a reliable starting point for full formal verification, where correctness established on test cases is generalized to all inputs satisfying the precondition. This avoids the costly scenario of attempting complete formal proofs on flawed specifications.

\section{Experiments}

We evaluate the ability of LLMs to generate formal specifications. While recent work has explored LLM-assisted verification and proof synthesis, the quality of LLM-generated specifications themselves remains largely unexamined. This gap is nontrivial: when verification fails, it is often unclear whether the cause lies in an incorrect specification or in limited proof capability. To address this challenge, we formulate four research questions (RQs), each supported by a targeted experimental design within the \textsc{Coins} framework shown in Section~\ref{sec:framework}.




Our study is organized around four research questions. RQ1 aims to identify the most reliable verifier using human-crafted specifications, thereby anchoring one side of the evaluation duality; building on this, RQ2 evaluates LLM-generated specifications through our multi-stage pipeline to reveal where generation fails, with the two together characterizing the intrinsic difficulty of specification generation. RQ3 addresses the interference from verification difficulty by disentangling errors caused by weak specifications from those due to insufficient proof power. Finally, RQ4 investigates the impact of specification style, examining how the use of executable \texttt{Fixpoint} definitions versus relational \texttt{Inductive} predicates affects clarity, compositionality, and verifiability.

Additional experimental results and analyses, including the human-vs-LLM specification comparison, error analysis, case studies, and comparisons with alternative evaluation metrics, are deferred to Appendices~\ref{appendix:human-llm},~\ref{appendix:rq6},~\ref{appendix:rq7},~\ref{appendix:rq8},~\ref{appendix:rq9}, and~\ref{appendix:rq10}.

\textbf{Model Selection.} To ensure a comprehensive evaluation, we select six large language models from four leading organizations. Our selection includes frontier models from Q4 2025: \textbf{GPT-5} (OpenAI), \textbf{Claude 4.5 Opus} (Anthropic), and \textbf{Gemini 3 Pro Preview} (Google), as well as widely adopted earlier models: \textbf{GPT-4o}, \textbf{Claude 3.7 Sonnet}. We also include \textbf{DeepSeek-V3.1}, a state-of-the-art, open-source model from DeepSeek. This diversity enables evaluation across both capability tiers and organizational implementations.

\textbf{Experimental Configuration.}
All models are accessed via their official APIs under a unified evaluation protocol.
For each proof generation task, models are allowed up to three attempts, with compiler or verifier error feedback provided between attempts.
Detailed model configurations and the full prompt templates used in all experiments are provided in Appendix~\ref{appendix:config}.

\subsection{RQ1: Which LLM Serves as the Most Reliable Verifier?}

\label{sec:rq1}


A core challenge in assessing specification quality is its inherent dependence on verification. In our early experiments, we attempted to verify specifications by proving that canonical implementations conform to them---the standard approach in formal verification. However, we found that such proofs are extremely complex and resist automation even when the specification is semantically correct. This difficulty is not unique to our setting: even in established benchmarks~\cite{thakur2025clever,ye2026verina}, ground-truth specifications are often validated only informally, rather than via rigorous formal proofs against implementations. We highlight this as a key observation:

\vspace{0.6em}
\begin{tcolorbox}[
  colback=gray!10,
  colframe=gray!60,
  boxrule=0.6pt,
  arc=2pt,
  left=6pt,
  right=6pt,
  top=6pt,
  bottom=6pt
]
\small
\label{duality}
\textbf{Evaluation Duality.}
When a specification fails to be verified, it is unclear whether the fault lies
in the specification itself or in the prover.
Stronger provers more accurately expose specification correctness,
while more correct specifications more reliably assess prover capability.
\end{tcolorbox}
\vspace{0.6em}

To break this cyclic dependency and minimize evaluation uncertainty, we anchor one side of the duality with our manually crafted specifications. While we cannot claim these as absolute ground truth, they have undergone careful human review and we believe them to be of higher quality than LLM-generated alternatives. Using these reference specifications, we evaluate multiple LLMs by measuring their \textsc{Pass}$_{\text{all}}$ rate on the associated proof objectives. The best-performing model is then adopted as the verifier for all subsequent experiments, providing the most reliable assessment of generated specification quality.

\begin{table}[!t]
\centering
\caption{Verification Performance Across Different Models (n=164). {\footnotesize\textit{Models ordered chronologically by release date. Best performing model highlighted. Higher values indicate better performance.}}}
\label{tab:verification}
\small
\begin{tabular}{@{} l c c @{}}
\toprule
\textbf{Verify Model} & \textbf{\textsc{Pass}$_{\text{first}}$} & \textbf{\textsc{Pass}$_{\text{all}}$} \\
\midrule
GPT-4o            & 37.20\% \textcolor{gray}{(61)}  & 7.32\% \textcolor{gray}{(12)} \\
Claude 3.7 Sonnet & 30.49\% \textcolor{gray}{(50)}  & 3.66\% \textcolor{gray}{(6)}  \\
GPT-5             & 60.98\% \textcolor{gray}{(100)} & 16.64\% \textcolor{gray}{(27)} \\
DeepSeek-V3.1     & 46.95\% \textcolor{gray}{(77)}  & 0.00\% \textcolor{gray}{(0)}  \\
Claude 4.5 Opus   & 72.56\% \textcolor{gray}{(119)} & 19.51\% \textcolor{gray}{(32)} \\
\rowcolor{cyan!8}
Gemini 3 Pro Preview & \textcolor{cyan!70!black}{\textbf{74.39\%}} \textcolor{gray}{(122)} & \textcolor{cyan!70!black}{\textbf{29.88\%}} \textcolor{gray}{(49)} \\
\bottomrule
\end{tabular}
\end{table}

Based on the evaluation shown in Table~\ref{tab:verification}, we identify Gemini 3 Pro Preview as the strongest overall performer in proof synthesis. Consequently, we select Gemini 3 Pro Preview as the primary proof engine for subsequent experiments. However, even Gemini 3 Pro Preview's performance remains far from ideal—achieving only 29.88\% \textsc{Pass}$_{\text{all}}$ on human-crafted specifications highlights the substantial gap between current LLM capabilities and reliable automated theorem proving.

\subsection{RQ2: What Observations Can be Derived from the Evaluation Results?}

\label{sec:rq2}

Having identified Gemini 3 Pro Preview
as the strongest verifier in RQ1 (Sec~\ref{sec:rq1}), we employ it as the primary verifier to evaluate specifications generated by all six models on \textsc{Coins}. Table~\ref{tab:generation} presents the results, where \textsc{Reject}$_{\text{all}}$ serves as our core metric for identifying candidate specifications.

\textbf{Syntax Correctness Is Still Challenging.}
Generating syntactically valid Rocq specifications remains a fundamental challenge for most models. While Gemini 3 Pro Preview achieves 78.05\% syntax correctness and GPT-5 reaches 67.07\%, earlier-generation models struggle significantly—DeepSeek-V3.1 produces only 7.93\% valid specifications, and GPT-4o achieves just 14.63\%. This syntactic barrier filters out the majority of attempts before semantic evaluation can even begin. We provide a detailed analysis of syntax error causes in Appendix~\ref{appendix:rq7}.


\textbf{Substantial Performance Gap.}
Significant disparities exist between frontier and earlier-generation models. Syntax correctness ranges from 78.05\% (Gemini 3 Pro Proview) to 7.93\% (DeepSeek-V3.1), with this gap amplifying at later stages—GPT-4o achieves only 4.27\% \textsc{Reject}$_{\text{all}}$ versus 26.22\% for Gemini 3 Pro Preview.

\textbf{Limited Self-Verification Bias.}
Self-verification experiments confirm Gemini 3 Pro Preview's lead is not due to evaluating its own outputs: strong models achieve comparable or lower \textsc{Pass}$_{\text{all}}$ under self-verification (e.g., Claude 4.5 Opus: 10.98\% vs.\ 14.63\%).

\textbf{Precision Emerges from Coverage.}
The close alignment between \textsc{Reject}$_{\text{all}}$ and \textsc{Pass}$_{\text{all}}$ indicates that specifications passing all test cases are typically precise enough to reject incorrect implementations. This can be attributed to two factors: our diverse test suite effectively constrains the semantic space, and for relatively simple programs, there is little room for over-approximation in specification semantics, which aligns with the findings in~\cite{le2025can}.

\textbf{Comprehensive Test Suites Are Essential.}
The sharp drop from \textsc{Pass}$_{\text{first}}$ to \textsc{Pass}$_{\text{all}}$ across all models (e.g., Gemini 3 Pro Preview: 60.98\%$\to$28.05\%) confirms that individual test cases are insufficient—multiple tests collectively constrain specification semantics and distinguish genuine correctness from accidental success.

\begin{tcolorbox}[
  colback=gray!10,
  colframe=gray!60,
  boxrule=0.6pt,
  arc=2pt,
  left=6pt,
  right=6pt,
  top=6pt,
  bottom=6pt
]
\small
\textbf{Syntax Is the First Bottleneck.} For relatively simple algorithmic tasks such as HumanEval, syntactically valid Rocq specifications are often semantically precise under our test-based evaluation. The dominant failure mode is not that models systematically choose the wrong behavioral intent, but that they hallucinate invalid Rocq syntax or ill-typed formal constructs while trying to express program behavior.
\end{tcolorbox}

\begin{table*}[htbp]
\centering
\setlength{\aboverulesep}{0pt}
\setlength{\belowrulesep}{0pt}
\caption{Specification Generation Results (n=164). {\footnotesize\textit{Syntax denotes syntactically valid specifications. Higher values indicate better performance.}}}
\label{tab:generation}
\small
\newcolumntype{Y}{>{\columncolor{yellow!15}}c}
\resizebox{0.85\linewidth}{!}{
\begin{tabular}{@{} cc ccc Y @{}}
\toprule
\textbf{Generate Model} & \textbf{Verify Model} & \textbf{Syntax} & \textbf{\textsc{Pass}$_{\text{first}}$} & \textbf{\textsc{Pass}$_{\text{all}}$} & \cellcolor{yellow!15}\textbf{\textsc{Reject}$_{\text{all}}$} \\
\midrule
\multirow{2}{*}{GPT-4o} 
  & Gemini 3 Pro Preview & \multirow{2}{*}{14.63\% \textcolor{gray}{(24)}} & 12.20\% \textcolor{gray}{(20)} & 4.27\% \textcolor{gray}{(7)} & 4.27\% \textcolor{gray}{(7)} \\
  & GPT-4o &  & 0.61\% \textcolor{gray}{(1)} & 0.61\% \textcolor{gray}{(1)} & -- \\
\midrule
\multirow{2}{*}{DeepSeek-V3.1} 
  & Gemini 3 Pro Preview & \multirow{2}{*}{7.93\% \textcolor{gray}{(13)}} & 4.27\% \textcolor{gray}{(7)} & 1.22\% \textcolor{gray}{(2)} & 1.22\% \textcolor{gray}{(2)} \\
  & DeepSeek-V3.1 &  & 1.22\% \textcolor{gray}{(2)} & 0.00\% \textcolor{gray}{(0)} & -- \\
\midrule
\multirow{2}{*}{Claude 3.7 Sonnet} 
  & Gemini 3 Pro Preview & \multirow{2}{*}{19.51\% \textcolor{gray}{(32)}} & 15.24\% \textcolor{gray}{(25)} & 1.83\% \textcolor{gray}{(3)} & 1.83\% \textcolor{gray}{(3)} \\
  & Claude 3.7 Sonnet &  & 4.27\% \textcolor{gray}{(7)}  & 0.00\% \textcolor{gray}{(0)} & -- \\
\midrule
\multirow{2}{*}{Claude 4.5 Opus} 
  & Gemini 3 Pro Preview & \multirow{2}{*}{59.76\% \textcolor{gray}{(98)}} & 45.12\% \textcolor{gray}{(74)} & 14.63\% \textcolor{gray}{(24)} & 14.63\% \textcolor{gray}{(24)} \\
  & Claude 4.5 Opus &  & 50.61\% \textcolor{gray}{(83)} & 10.98\% \textcolor{gray}{(18)} & -- \\
\midrule
\multirow{2}{*}{GPT-5} 
  & Gemini 3 Pro Preview & \multirow{2}{*}{67.07\% \textcolor{gray}{(110)}} & 60.98\% \textcolor{gray}{(100)} & 15.24\% \textcolor{gray}{(25)} & 15.24\% \textcolor{gray}{(25)} \\
  & GPT-5 &  & 28.66\% \textcolor{gray}{(47)} & 7.32\% \textcolor{gray}{(12)} & -- \\
\midrule
Gemini 3 Pro Preview & Gemini 3 Pro Preview & 78.05\% \textcolor{gray}{(128)} & 60.98\% \textcolor{gray}{(100)} & 28.05\% \textcolor{gray}{(46)} & 28.05\% \textcolor{gray}{(46)} \\
\bottomrule
\end{tabular}
}
\end{table*}

\subsection{RQ3: Why Is Specification Quality Difficult to Evaluate?}
\label{sec:rq3}

\begin{figure}[htbp]
\centering
\includegraphics[width=1.0\linewidth]{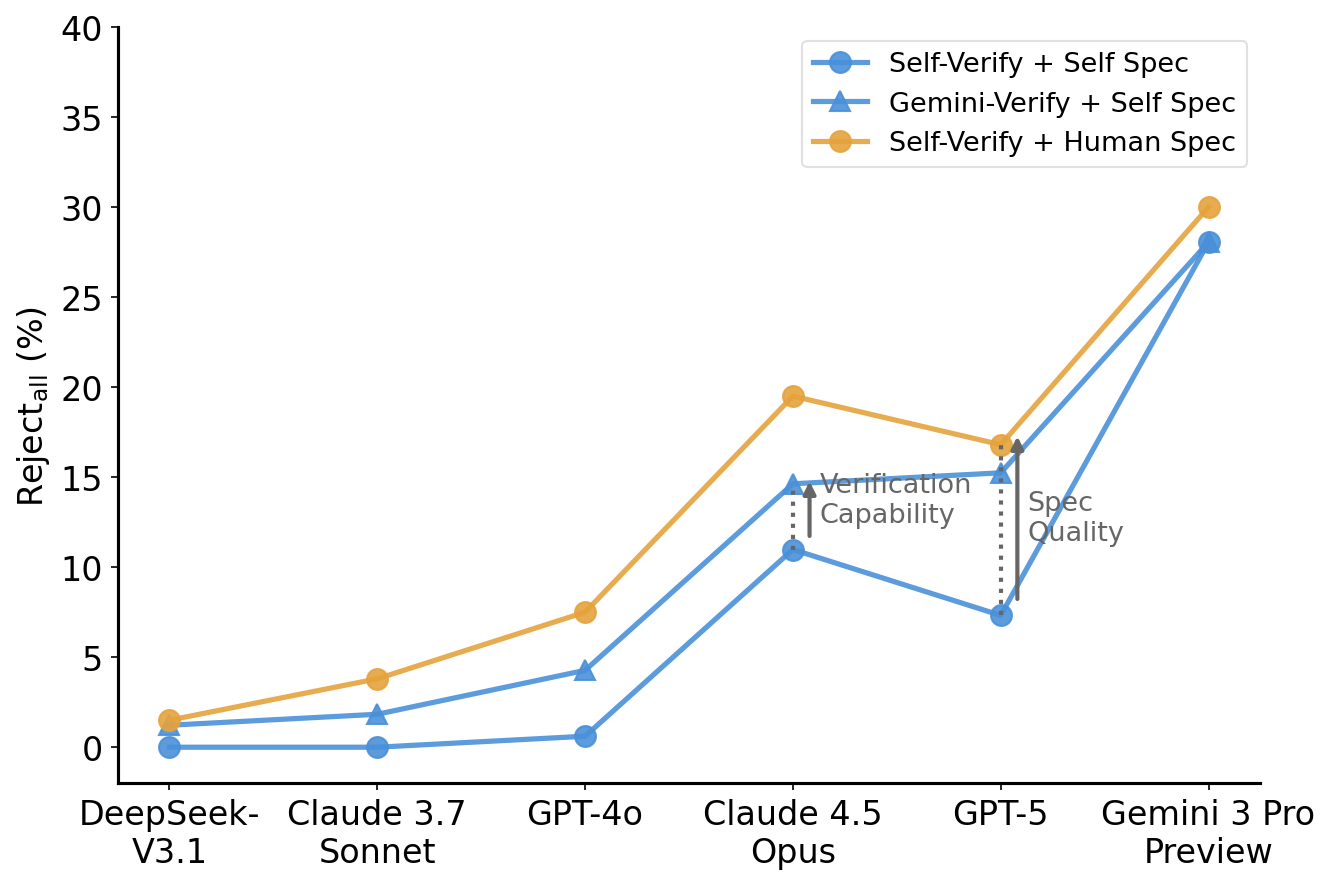}
\caption{Ablation study disentangling specification quality and verification capability.
Compared to the \emph{Self-Verify + Self Spec} baseline, using human-written specifications improves performance by $+5.01\%$, while replacing self-verification with a stronger verifier improves performance by $+3.05\%$.
Both factors significantly impact performance.}
\label{fig:ablation}
\end{figure}

\begin{tcolorbox}[
  colback=gray!10,
  colframe=gray!60,
  boxrule=0.6pt,
  arc=2pt,
  left=6pt,
  right=6pt,
  top=6pt,
  bottom=6pt
]
\small
\textbf{Verification Complexity Trade-off.} The harder the verification task, the stronger the prover required, and the more difficult it becomes to accurately assess specification quality. Even when the verification objective perfectly captures specification correctness, overly complex proof obligations risk systematic underestimation, rendering such benchmarks unreliable metrics for specification generation.
\label{trade-off}
\end{tcolorbox}

Compared to \textsc{Clever}'s end-to-end approach, which achieves only 0.621\% correctness and provides virtually no discriminative power across models, \textsc{Coins} successfully differentiates specification generation capabilities among all six models. As shown in Table~\ref{tab:generation}, \textsc{Reject}$_{\text{all}}$ rates range from 1.22\% (DeepSeek-V3.1) to 28.05\% (Gemini 3 Pro Preview), demonstrating meaningful variance across the model spectrum.

However, even the task of verifying test case acceptance proves highly challenging. Table~\ref{tab:verification} and Table~\ref{tab:generation} reveal that while models can often verify individual test cases, consistent verification across all cases remains elusive. This makes it difficult to determine whether failures stem from proof synthesis limitations or genuine specification defects.

To disentangle these factors, we conduct an ablation study with two controlled interventions (Figure~\ref{fig:ablation}).
We start from a baseline where each model verifies its own generated specifications (\emph{Self-Verify + Self Spec}).
We then separately improve either verification capability by replacing self-verification with Gemini~3~Pro~Preview (\emph{Gemini-Verify + Self Spec}), or specification quality by substituting LLM-generated specifications with human-written references (\emph{Self-Verify + Human Spec}).

Both factors significantly affect performance.
Improving specification quality yields an average gain of $+5.01\%$, while stronger verification contributes $+3.05\%$.
Although the former has a larger impact, verification capability remains a substantial factor even for test case acceptance.

\subsection{RQ4: Rethinking Executability: Should Specifications Be Purely Relational?}
\label{sec:rq4}

When writing program specifications, handling inductive structures is inevitable. In imperative programming languages, this is typically achieved through loops or recursion. In Rocq, there are three primary approaches: (1) built-in higher-order functions or predicates from Rocq's standard library, such as \texttt{length} and \texttt{fold}; (2) user-defined \texttt{Fixpoint} functions, which are custom recursive functions; and (3) user-defined \texttt{Inductive} predicates, which are inductively defined logical relations.

The key distinction between the latter two lies in their executability: \texttt{Fixpoint} definitions are computationally executable within Rocq, whereas \texttt{Inductive} predicates are purely logical relations without computational content. Importantly, these approaches can be freely combined in practice. Consequently, whether a specification is ``executable'' is not a well-defined mathematical concept---a single specification may contain both executable and non-executable components.


Given this observation, the rigid requirement in \textsc{Clever} that all manual specifications be non-executable appears overly restrictive in a broader specification-generation setting. This constraint is primarily motivated by \textsc{Clever}’s design, where implementations are synthesized directly from specifications within Lean, making executability a potential source of specification leakage and trivial solutions. In contrast, our focus is on evaluating specification quality itself rather than end-to-end code synthesis. Accordingly, we focus specifically on the use of \texttt{Fixpoint} and provide two versions of our specification suite: one \textbf{permitting} \texttt{Fixpoint} definitions and one \textbf{prohibiting} them. This design enables us to systematically investigate how the availability of \texttt{Fixpoint} constructs affects both specification quality and proof difficulty.

\begin{figure}[t]
    \centering
    \includegraphics[width=0.9\linewidth]{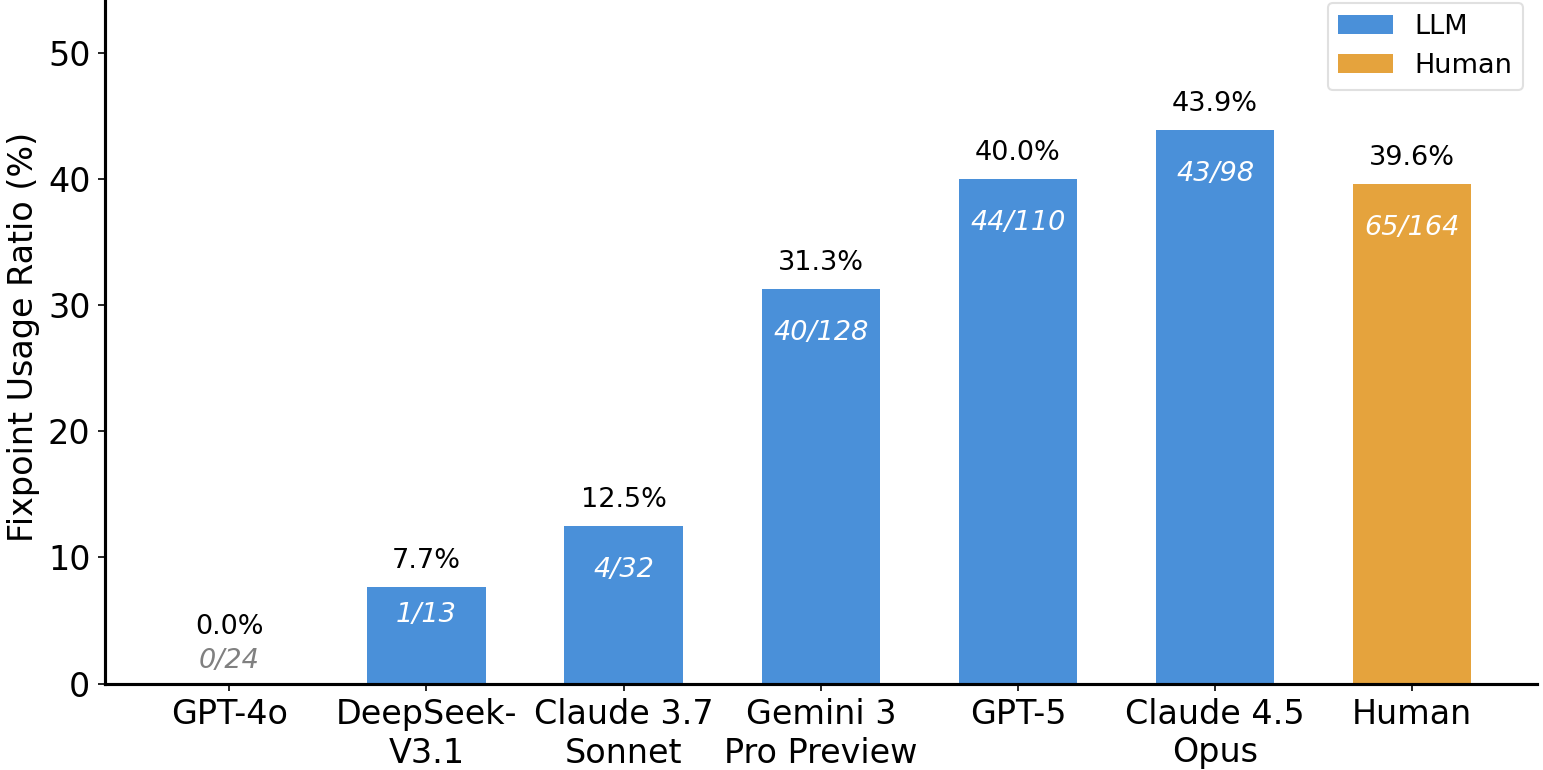}
    \caption{Frequency of \texttt{Fixpoint} usage across specifications. LLMs exhibit usage ratios (31--44\%) comparable to human (39.6\%).}
    \label{fig:fixpoint-usage}
\end{figure}

Figure~\ref{fig:fixpoint-usage} presents the frequency of \texttt{Fixpoint} usage in specifications. A clear divide emerges: weaker models (GPT-4o, DeepSeek-V3.1, Claude 3.7 Sonnet) rarely employ \texttt{Fixpoint} constructs, while stronger models (Gemini 3 Pro Preview, GPT-5, Claude 4.5 Opus) utilize them substantially more often (31--44\%). Notably, human experts exhibit a similar ratio of approximately 40\%, suggesting that state-of-the-art models have developed comparable judgment in determining when recursive structures are necessary for precise formalization.

\begin{figure}[t]
    \centering
    \includegraphics[width=0.82\linewidth]{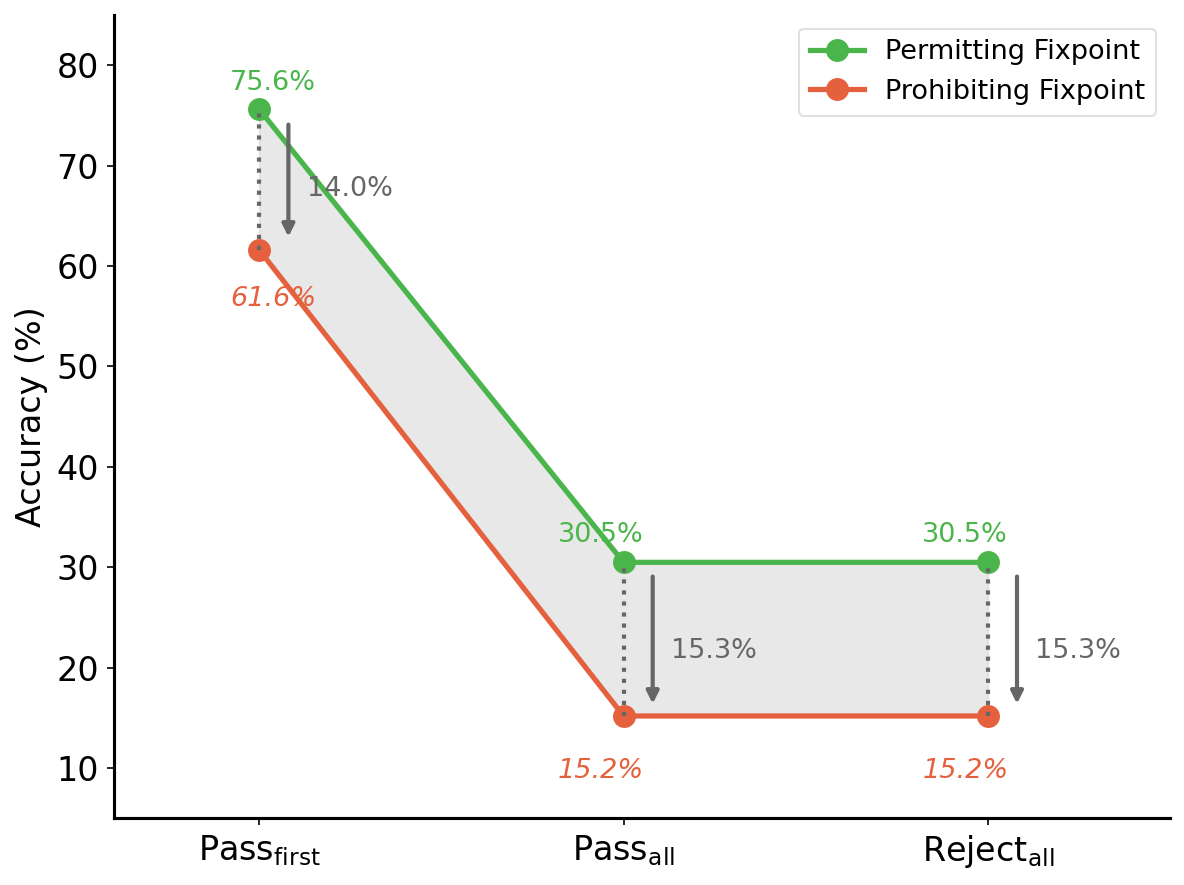}
    \caption{Impact of prohibiting \texttt{Fixpoint} on human specification quality ($n=164$). Disallowing recursive constructs leads to substantial performance drops across all metrics.}
    \label{fig:fixpoint-impact}
\end{figure}

This pattern indicates that \texttt{Fixpoint} plays a crucial role in effective specification design. To validate this hypothesis, we evaluated Gemini 3 Pro Preview on both versions of our human-written specifications. As shown in Figure~\ref{fig:fixpoint-impact}, proofs against specifications prohibiting \texttt{Fixpoint} are substantially harder, with a notable drop in the success rate. Furthermore, specifications without \texttt{Fixpoint} require significantly more lines of code and exhibit lower readability, as complex recursive logic must be manually unfolded into verbose relational rules. These results lead to the following finding:

\begin{tcolorbox}[
  colback=gray!10,
  colframe=gray!60,
  boxrule=0.6pt,
  arc=2pt,
  left=6pt,
  right=6pt,
  top=6pt,
  bottom=6pt
]
\small
\label{executable}
\textbf{Executable Components Matter.} Well-designed specifications benefit significantly from executable components. Encapsulating sub-logic within \texttt{Fixpoint} definitions effectively reduces both specification complexity and verification difficulty. Pure relational specifications, while theoretically elegant, lead to verbose code and unwieldy proof obligations that challenge even the strongest models.
\end{tcolorbox}

\section{Related Work}

\paragraph{LLMs for Interactive Theorem Proving.}
The integration of Large Language Models (LLMs) with Interactive Theorem Provers (ITPs) has rapidly evolved from local tactic prediction to the broader challenge of autoformalization. Early Rocq-side benchmarks such as CoqGym~\cite{yang2019learning} and Proverbot9001~\cite{sanchez2020generating} enabled supervised learning on large-scale proof traces to suggest individual proof steps. LeanDojo~\cite{yang2023leandojo} then pioneered retrieval-augmented generation for premise selection in Lean, while Baldur~\cite{first2023baldur} and Lemur~\cite{wu2024lemur} extended this line to whole-proof generation and repair within closed compiler-feedback loops. More recent efforts, including DeepSeek-Prover-v2~\cite{ren2025deepseek} and APOLLO~\cite{ospanov2026apollo}, introduce reinforcement learning and compiler-guided repair to further boost proving success rates, and NTP4VC~\cite{xu2026neural} extends neural theorem proving to industrial verification-condition proving. All of these works focus on finding proofs for \emph{given} specifications. Our work addresses the complementary problem of evaluating the quality of the specifications themselves. \textsc{Coins} is positioned as a diagnostic methodology for synthesized specifications, providing the evaluative foundation needed to interpret the artifacts produced by such autoformalization pipelines.

\paragraph{Benchmarks for Verifiable Code Generation.}
Research has increasingly shifted from evaluating functional correctness toward the formal verification of synthesized code, requiring models to generate implementations that satisfy formal contracts. Dafny-based benchmarks such as Clover~\cite{sun2024clover} and DafnyBench~\cite{loughridge2025dafnybench} evaluate models on generating verified implementations or auxiliary invariants. In the Lean~4 ecosystem, Clever~\cite{thakur2025clever} requires models to jointly generate specifications, implementations, and equivalence proofs against hidden ground truths. VeriBench~\cite{miranda2026veribench} evaluates end-to-end formal code verification in Lean~4, while its FTP variant focuses on theorem proving for code verification tasks.

More recently, VERINA~\cite{ye2026verina} introduces a modular benchmark for code, specification, and proof generation, but its specification evaluation still primarily relies on ground-truth specifications as semantic references. VeriEquivBench~\cite{zeng2026veriequivbench} replaces ground-truth matching with a Dafny-verified equivalence score over $2{,}389$ algorithmic problems, but still requires proving equivalence between generated code and generated specifications; moreover, its reliance on Dafny's automated SMT-based backend limits the complexity of specifications that can be effectively checked. A further limitation of Lean-based benchmarks such as Clever and VERINA is that their final verified artifacts are Lean implementations and specifications, whereas industrial software is rarely implemented in Lean. In contrast, \textsc{Coins} pairs real-world Python implementations from HumanEval with Rocq specifications, and sidesteps full equivalence proofs by evaluating specifications on instantiated test cases. This test-case-based proving methodology disentangles the Evaluation Duality and provides a more discriminative assessment than binary verification outcomes.

\paragraph{LLM-based Specification Synthesis.}
The automated synthesis of formal properties from code has advanced from classical dynamic invariant detection (Daikon~\cite{Daikon}) to neuro-symbolic and iterative refinement approaches. AutoSpec~\cite{autospec} leverages LLMs and static analysis to decompose programs and refine generated specifications. Recent work has further explored LLM-assisted generation of contracts, loop invariants, and assertions: SpecGen~\cite{ma2025specgen}, Clause2Inv~\cite{cao2025clause2inv}, NeuroInv~\cite{king2025neurosymbolic}, SLD-Spec~\cite{chen2025sldspec}, and SESpec~\cite{yang2026integratingsymbolicexecutionllms}. Beyond generation, other works leverage LLMs to automate test-case construction for validating formal specifications in languages such as Alloy~\cite{cunha2025validating} or integrate property-based testing as a core engine for iterative code refinement~\cite{he2025pgs}. These works develop sophisticated generation and validation pipelines; \textsc{Coins} contributes a complementary systematic evaluation methodology to analyse the outputs of such systems, enabling an empirical investigation into how specification design choices---such as the Reasoning Gap between operational and relational styles---affect formal verifiability.

\paragraph{Property-Based Testing and Mutation Analysis for Proof Assistants.}
\textsc{Coins} is related to, but distinct from, property-based testing and mutation-analysis tools for proof assistants. The earliest such tool, QuickChick~\cite{paraskevopoulou2015foundational}, tests \emph{executable} properties against generated inputs; in contrast, \textsc{Coins} evaluates potentially non-executable logical specifications by constructing Rocq-checked proof obligations for trusted input-output behaviours. Subsequent Rocq-side mutation-analysis tools, mCoq~\cite{celik2019mutation} and MutantChick~\cite{cavada2020mutantchick}, mutate formal artifacts to assess proof-suite adequacy; our mutation step instead mutates Python implementations only, generating negative cases that measure whether a candidate specification is overly permissive.

\section{Limitations and Future Work}
Our current implementation is limited to Rocq specifications for the 164 HumanEval problems. It reflects the substantial cost of manually authoring formal specifications rather than a methodological constraint. The full experimental campaign spanned six frontier models, involving over 100{,}000 theorem-proving attempts, 620 mutants, and 123{,}952 positive test cases across the benchmark, where each proof attempt required multi-round LLM compiler interaction with Rocq's type checker. Scaling to other proof assistants or benchmarks demands comparable human annotation and computational effort, which we leave to future work.

Importantly, the \textsc{Coins} methodology is not tied to Rocq or Python, and it applies wherever a test suite and a formal specification language are available. The evaluation pipeline can be readily extended to different implementation languages and proof assistants. In fact, it does not even require a reference implementation: only a natural-language requirement, a function signature, and user-provided positive and negative examples are needed to drive the entire evaluation process. We view \textsc{Coins} as a necessary first step toward our broader goal: fully automated C program verification within the CompCert ecosystem.

\section{Conclusion}

This study suggests that the apparent power of LLMs in program specification synthesis is inseparable from how specification quality is evaluated. The central challenge is therefore not only whether models can generate stronger or more expressive specifications, but whether our evaluation protocols can meaningfully observe and measure that strength. When evaluation is framed solely around whether an implementation satisfies a specification, specification quality itself becomes a latent variable, obscured by the difficulty of formal proof. Strong correctness criteria such as full semantic equivalence further exacerbate this issue by conflating specification quality with prover capability and proof complexity.

Our findings highlight a fundamental asymmetry in formal reasoning: successful proofs constitute reliable positive evidence, while failed proofs are intrinsically ambiguous—they may stem from specification errors, prover limitations, or intractable proof obligations. Ignoring this asymmetry leads to systematic underestimation of model-generated specifications and masks genuine differences in model capability. In contrast, test-case--based formal reasoning provides a pragmatic middle ground, offering sound, interpretable evidence without resorting to brittle, all-or-nothing judgments.

More broadly, this work argues that progress in LLM-based specification synthesis cannot be assessed solely by the ability to discharge equivalence proofs. Instead, future benchmarks should prioritize diagnostic evaluation signals that reflect what contemporary formal tools can establish reliably. By reframing evaluation around provable behavioral evidence, we aim to provide a more faithful answer to how powerful LLMs truly are in generating program specifications, and to enable more principled progress in neural program verification.

\section*{Software and Data}

We make our implementation and datasets publicly available at \url{https://github.com/taylor-swift-13/Coins}.

\section*{Acknowledgements}

This work has been partially funded by the National Natural Science Foundation of China under grant No.~62192732, No.~62432005, W2511064, No.~62572459 and No.~62502475, the CAS Project for Young Scientists in Basic Research under Grant No. YSBR-123, and the ISCAS Basic Research under Grant No. ISCAS-JCZD-202406.

\section*{Impact Statement}


This paper presents work whose goal is to advance the field of Machine
Learning. There are many potential societal consequences of our work, none
that we feel must be specifically highlighted here.



\bibliography{example_paper}
\bibliographystyle{icml2026}

\newpage
\appendix
\onecolumn
\section{Negative Test Cases Generation}
\label{appendix:nega}

We supplement positive test cases with automatically generated \emph{negative cases} via mutation testing to detect overly permissive specifications that may incorrectly accept faulty implementations. This Appendix details our negative case generation methodology.

\subsection{Overview}

Our approach leverages mutated versions of canonical implementations as oracles for identifying inputs that distinguish correct from incorrect behavior. Given a canonical implementation $f$ and a mutant $f'$, if there exists an input $x$ such that $f(x) \neq f'(x)$, then the pair $(x, f'(x))$ constitutes a \emph{negative case}---a test case on which the original implementation would fail.

\subsection{Mutation Operators}

We employ a two-stage mutation strategy: first applying efficient deterministic mutations, then falling back to LLM-guided mutations only when necessary.



\textbf{Stage 1: AST-Based Mutations.}
Our deterministic mutation engine applies the following operator categories to the Abstract Syntax Tree (AST) of canonical solutions:
\begin{itemize}[leftmargin=*,nosep]
    \item \textbf{Binary operators:} $+$ $\leftrightarrow$ $-$, $\times$ $\leftrightarrow$ $\div$, $\%$ $\to$ $/$, $//$ $\to$ $/$
    \item \textbf{Comparison operators:} $==$ $\leftrightarrow$ $!=$, $<$ $\leftrightarrow$ $\geq$, $>$ $\leftrightarrow$ $\leq$
    \item \textbf{Logical operators:} \texttt{and} $\leftrightarrow$ \texttt{or}
    \item \textbf{Constants:} $0 \to 1$, non-zero $\to 0$, \texttt{True} $\leftrightarrow$ \texttt{False}
\end{itemize}
For each task, we generate up to 5 mutants by randomly selecting mutation points. This approach is fast, reproducible, and covers most structural deviations.

\textbf{Stage 2: LLM-Guided Mutations.}
To complement AST-based mutations, we employ LLMs to generate semantically meaningful mutations that are difficult to express as syntactic rewrites. The LLM is prompted to introduce subtle bugs that preserve syntactic validity while altering program semantics (\emph{e.g.}, off-by-one errors, incorrect boundary handling, or swapped variable references). This stage targets semantic deviations that AST-level operators alone may miss, such as algorithmic logic errors or context-dependent faults.

Together, the two stages are designed to cover both structural and semantic dimensions: Stage~1 provides systematic, reproducible syntactic coverage, while Stage~2 adds harder-to-detect semantic mutants. 

\subsection{Negative Case Generation Algorithm}

Algorithm~\ref{alg:negative-case} presents our negative case generation procedure in two phases.

\begin{algorithm}[H]
\caption{Negative Case Generation via Mutation Testing}
\label{alg:negative-case}
\begin{algorithmic}[1]
\REQUIRE Canonical implementation $f$, test inputs $\mathcal{I}$, mutants $\mathcal{M}$, target $k$
\ENSURE Negative cases $\mathcal{C}$
\STATE $\mathcal{C} \gets \emptyset$
\STATE \COMMENT{Phase 1: Test existing inputs against mutants}
\FORALL{$x \in \mathcal{I}$}
    \FORALL{$f' \in \mathcal{M}$}
        \IF{$f'(x) \neq f(x)$}
            \STATE $\mathcal{C} \gets \mathcal{C} \cup \{(x, f'(x))\}$
        \ENDIF
    \ENDFOR
\ENDFOR
\STATE
\STATE \COMMENT{Phase 2: Generate new inputs if needed}
\WHILE{$|\mathcal{C}| < k$}
    \STATE $x' \gets \textsc{Mutate}(\textsc{Random}(\mathcal{I}))$
    \FORALL{$f' \in \mathcal{M}$}
        \IF{$f'(x') \neq f(x')$}
            \STATE $\mathcal{C} \gets \mathcal{C} \cup \{(x', f'(x'))\}$
        \ENDIF
    \ENDFOR
\ENDWHILE
\STATE
\STATE \textbf{return} $\textsc{Sample}(\mathcal{C}, k)$
\end{algorithmic}
\end{algorithm}

\subsection{Input Variant Generation Strategies}

When direct enumeration fails to produce sufficient negative cases, we generate input variants using multiple strategies that cycle every 50 attempts:
\textbf{(1) Random:} random perturbations;
\textbf{(2) Small change:} minimal perturbations ($\pm 1$ for integers, $\pm 0.1$ for floats);
\textbf{(3) Large change:} significant perturbations ($\pm 5$ to $\pm 10$);
\textbf{(4) Expand:} append elements to collections;
\textbf{(5) Shrink:} remove elements from collections.

\subsection{Implementation Details}

\paragraph{Execution.} Each code execution is bounded by a 1-second timeout; timed-out executions are treated as errors.

\paragraph{Input Sources.} We extract inputs primarily from HumanEvalPlus, which provides extended test suites. When unavailable, we fall back to instrumenting the original HumanEval \texttt{check} function.

\paragraph{Deduplication.} Negative cases are deduplicated via hashing of serialized $(input, output)$ pairs to ensure diversity.

\subsection{Statistics}

Applying this methodology to the 164 HumanEval tasks yields 1,640 negative cases. All tasks successfully generated the target number.

\subsection{Usage in Evaluation}

Negative cases serve as negative examples during specification evaluation. A specification is considered \emph{overly permissive} if it accepts an implementation that produces incorrect outputs on these negative cases.
\section{Theoretical Foundations}
\label{appendix:theory}

\newcommand{\passmark}{\mathcal{P}_{\text{all}}}
\newcommand{\rejectmark}{\mathcal{R}_{\text{all}}}

We formalize the behavior of a Large Language Model (LLM) acting as a theorem prover within the context of specification verification. This abstraction enables rigorous analysis of the fundamental limitations inherent in using LLMs for formal verification tasks.

\subsection{LLM as a Probabilistic Prover}
\label{appendix:llm-model}

Consider a specification $S$ defined as a logical predicate over input-output pairs. Given a test case $(i, o)$ consisting of an input $i$ and output $o$, we say that $S$ \emph{accepts} the test case, denoted $S(i, o) = \top$, if the input-output pair satisfies the specification. Conversely, $S$ \emph{rejects} the test case, denoted $S(i, o) = \bot$, if the pair violates the specification.

We model an LLM-based prover as a probabilistic program $\mathcal{M}$ that takes as input a specification $S$ and a test case $(i, o)$, and attempts to produce a proof of acceptance.

\begin{definition}[LLM Proving Oracle]
\label{def:llm-oracle}
Let $\mathcal{M}$ be an LLM-based prover. For a specification $S$ and test case $(i, o)$, define the random variable $\textsc{Verify}_{\mathcal{M}}(S, i, o) \in \{\texttt{proof}, \bot\}$ as follows:
\begin{itemize}[leftmargin=1.5em, itemsep=4pt, topsep=4pt]
    \item If $S(i, o) = \top$ (true acceptance):
    \begin{align}
    \Pr[\textsc{Verify}_{\mathcal{M}}(S, i, o) = \texttt{proof}] &= p_{S,i,o} \\
    \Pr[\textsc{Verify}_{\mathcal{M}}(S, i, o) = \bot] &= 1 - p_{S,i,o}
    \end{align}
    where $p_{S,i,o} \in [0, 1]$ is the \emph{proof success probability}, depending on $S$, $(i, o)$, and $\mathcal{M}$.
    \item If $S(i, o) = \bot$ (true rejection):
    \begin{align}
    \Pr[\textsc{Verify}_{\mathcal{M}}(S, i, o) = \texttt{proof}] &= 0 \\
    \Pr[\textsc{Verify}_{\mathcal{M}}(S, i, o) = \bot] &= 1
    \end{align}
\end{itemize}
\end{definition}

The second case captures the fundamental \emph{soundness} property: the LLM cannot produce a valid proof for a false proposition, as any generated proof would be rejected by the Rocq type checker.

\subsection{The Undecidability of Verification Failure}

A critical consequence of our model is that verification failure is fundamentally uninformative about specification correctness.

\begin{theorem}[Verification Undecidability]
\label{thm:undecidability}
When $\textsc{Verify}_{\mathcal{M}}(S, i, o) = \bot$, it is impossible to determine whether $S(i, o) = \top$ or $S(i, o) = \bot$ without additional information about $p_{S,i,o}$.
\end{theorem}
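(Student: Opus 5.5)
The plan is to make the statement precise as an indistinguishability claim and then prove it with a Bayesian calculation inside the model of Definition~\ref{def:llm-oracle}.

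\textbf{Formalization.} The only observable is the outcome $\textsc{Verify}_{\mathcal{M}}(S,i,o)$. Saying that we ``cannot determine'' $S(i,o)$ from $\bot$ means the following. For every decision rule $D$ that maps the observation $\bot$ to a guess in $\{\top,\bot\}$, we exhibit two admissible instances that both produce $\bot$ with positive probability but differ in the truth value of $S(i,o)$. On at least one of these instances, $D$ must answer incorrectly.

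\textbf{Step 1 (both worlds are consistent with $\bot$).}
\begin{itemize}
\item \emph{False-spec world, $S(i,o)=\bot$.} The definition gives $\Pr[\textsc{Verify}=\bot]=1$.
\item \emph{True-spec world, $S(i,o)=\top$.} We have $\Pr[\textsc{Verify}=\bot]=1-p_{S,i,o}$. Since the definition allows any $p_{S,i,o}\in[0,1]$, this is positive whenever $p_{S,i,o}<1$.
\end{itemize}
Taking in particular $p_{S,i,o}=0$, the output distribution is the point mass on $\bot$ in both worlds. The two worlds are therefore literally indistinguishable from the observation, and no rule, even a randomized one, can beat random guessing between them.

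\textbf{Step 2 (Bayesian form).} Let $\pi=\Pr[S(i,o)=\top]\in(0,1)$ be a prior. Then
\[
\Pr[S(i,o)=\top\mid\bot]=\frac{\pi(1-p)}{\pi(1-p)+(1-\pi)}, \qquad p=p_{S,i,o}.
\]
This posterior ranges continuously over $[0,\pi]$ as $p$ ranges over $[1,0]$. Consequently the posterior, and hence the correct verdict, is a function of the unknown $p_{S,i,o}$ and is not fixed by $\bot$ alone. The only exception is $p=1$, which is exactly the ``additional information about $p_{S,i,o}$'' clause in the statement. As a remark, I would also note that $\bot$ weakly lowers belief in $\top$, since the posterior is at most $\pi$. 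The failure is informative only to the extent that $p$ is known and large.

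\textbf{Main obstacle.} The difficulty is not computational but conceptual: the phrase ``impossible to determine'' has to be pinned down. I would commit to the indistinguishability formulation of Step~1 as the theorem's content and give Step~2 as the quantitative refinement. I would also state explicitly that the argument relies on the model imposing no constraint on $p_{S,i,o}$ beyond $[0,1]$, in particular permitting $p_{S,i,o}<1$.
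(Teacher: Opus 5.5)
Your proposal is correct and follows the paper's proof. The paper likewise exhibits two scenarios, $S(i,o)=\top$ with small $p_{S,i,o}$ versus $S(i,o)=\bot$, that both yield $\bot$ and are observationally indistinguishable; your choice $p_{S,i,o}=0$ makes this exact (identical output distributions), and your Step~2 posterior is a correct optional refinement not in the paper, sharpening ``no definitive information'' to ``weakly informative, never conclusive unless $p_{S,i,o}=1$ is known.''
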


\begin{proof}
Consider two scenarios producing identical observable outcomes:
\begin{itemize}[leftmargin=1.5em, itemsep=2pt, topsep=2pt]
    \item $S(i, o) = \top$ but $p_{S,i,o}$ is small, i.e., the proof is difficult.
    \item $S(i, o) = \bot$, i.e., no valid proof exists.
\end{itemize}
Both yield $\textsc{Verify}_{\mathcal{M}}(S, i, o) = \bot$. Without knowledge of $p_{S,i,o}$, these are observationally indistinguishable.
\end{proof}

This theorem reveals a fundamental Finding \ref{duality} \emph{evaluation duality}: when verification fails, the fault may lie in either the specification (generator error) or the prover (prover limitation). This provides the formal basis for the ``Evaluation Duality'' observation in Section~\ref{sec:rq1}.

A key insight is the \emph{information asymmetry} between outcomes:
\begin{itemize}[leftmargin=1.5em, itemsep=2pt, topsep=2pt]
    \item Positive evidence ($\textsc{Verify} = \texttt{proof}$) guarantees $S(i,o) = \top$ by Rocq's soundness.
    \item Negative evidence ($\textsc{Verify} = \bot$) provides no definitive information.
\end{itemize}
This asymmetry motivates using successful proof construction as the primary evaluation criterion.

\subsection{Factors Affecting Proof Success Probability}

The proof success probability $p_{S,i,o}$ varies based on multiple factors:
\begin{itemize}[leftmargin=1.5em, itemsep=4pt, topsep=4pt]
    \item \emph{Proof complexity}: $p_{S,i,o}$ decreases as proof difficulty increases, including proof length, need for auxiliary lemmas, and sophisticated tactics. Our experiments in Section~\ref{sec:rq2} support this: all models show substantial drops from \textsc{Pass}$_{\text
    {first}}$ to \textsc{Pass}$_{\text{all}}$ (e.g., Gemini 3 Pro Preview: 60.98\% $\to$ 28.05\%).
    \item \emph{Specification style}: executable components (\texttt{Fixpoint}) often admit simpler proofs via computation, increasing $p_{S,i,o}$; pure relational specifications (\texttt{Inductive}) may require explicit inductive reasoning, decreasing $p_{S,i,o}$. This is empirically validated in Section~\ref{sec:rq4}.
    \item \emph{Test case complexity}: larger inputs or edge cases require more intricate proof steps, reducing $p_{S,i,o}$.
    \item \emph{Model capability}: different LLMs exhibit varying $p_{S,i,o}$ for the same task, as shown in Table~\ref{tab:verification}.
\end{itemize}

\subsection{Formal Analysis of the \ours{} Framework}

We provide formal justification for our evaluation stages (Section~\ref{sec:framework}). To distinguish \emph{metrics} (percentages in experiments) from \emph{predicates} (boolean functions), we use:
\begin{itemize}[leftmargin=1.5em, itemsep=2pt, topsep=2pt]
    \item $\passmark(S), \rejectmark(S) \in \{\top, \bot\}$: predicates on specification $S$.
    \item \textsc{Pass}$_{\text{all}}$, \textsc{Reject}$_{\text{all}}$: metrics representing the proportion of specifications satisfying the predicates.
\end{itemize}

\paragraph{The $\passmark$ predicate.}
Let $\mathcal{T}^{+} = \{(i_1, o_1), \ldots, (i_n, o_n)\}$ be the positive test cases. Define:
\begin{equation}
\passmark(S) = \top \iff \forall (i, o) \in \mathcal{T}^{+} : \textsc{Verify}_{\mathcal{M}}(S, i, o) = \texttt{proof}
\end{equation}
The metric \textsc{Pass}$_{\text{all}} = |\{S : \passmark(S) = \top\}| / N$ where $N$ is the total number of specifications. By Rocq's soundness, $\passmark(S) = \top$ implies $S$ accepts all positive test cases: $\forall (i, o) \in \mathcal{T}^{+} : S(i, o) = \top$. Note that $\passmark(S) = \bot$ does not imply incorrectness---it may indicate low $p_{S,i,o}$ for some test cases.

\paragraph{The $\rejectmark$ predicate.}
Let $\mathcal{T}^{-} = \{(i'_1, o'_1), \ldots, (i'_m, o'_m)\}$ be the negative cases (Section~\ref{sec:neg-cases}). Define:
\begin{equation}
\rejectmark(S) = \top \iff \forall (i', o') \in \mathcal{T}^{-} : \textsc{Verify}_{\mathcal{M}}(S, i', o') = \bot
\end{equation}
If $\textsc{Verify}_{\mathcal{M}}(S, i', o') = \texttt{proof}$ for any $(i', o') \in \mathcal{T}^{-}$, then $S$ is provably overly permissive. Conversely, $\rejectmark(S) = \top$ does not guarantee $S$ rejects all negative cases---the LLM may fail to find proofs even when they exist.

\paragraph{Candidate specifications.}
A specification $S$ with $\passmark(S) = \rejectmark(S) = \top$ is a \emph{candidate specification} (corresponding to specifications passing \textsc{Reject}$_\text{all}$ in Table~\ref{tab:generation}). Such specifications:
\begin{itemize}[leftmargin=1.5em, itemsep=2pt, topsep=2pt]
    \item provably accept all positive test cases;
    \item are not provably overly permissive;
    \item may still be incorrect in ways not captured by the test suite.
\end{itemize}

\subsection{Why Full Equivalence Proving is Impractical}

Consider proving equivalence $\forall\, i, o : S_1(i, o) \Leftrightarrow S_2(i, o)$. Unlike test-case verification, this requires handling arbitrary inputs through induction. Let $p_{\text{equiv}}$ denote the success probability; empirically $p_{\text{equiv}} \ll p_{S,i,o}$.

Our results in Section~\ref{sec:rq3} confirm this: among specifications passing all test-based criteria, only 31.0\% yield provably isomorphic specifications. This illustrates the 
Finding \ref{trade-off} \emph{verification complexity trade-off}: as verification complexity increases, proof success probability decreases, causing systematic underestimation of specification quality.

\subsection{Design Principles}

Our analysis yields four design principles:

\textbf{Exploit asymmetry.} Maximize use of positive evidence while acknowledging uncertainty in negative evidence.

\textbf{Calibrate difficulty.} Avoid verification tasks that are too difficult; overly complex proof obligations lead to uniformly low $p_{S,i,o}$, preventing meaningful differentiation between specifications.

\textbf{Use strongest prover.} Maximize $p_{S,i,o}$ to minimize false negatives from prover limitations.

\textbf{Anchor the duality.} Use known-good artifacts (human specifications) to anchor assessment.

The \ours{} framework embodies these principles through test-case-based verification, Gemini 3 Pro Preview as prover (Section~\ref{sec:rq1}), human specification anchors (Section~\ref{sec:rq3}), and epistemic humility about failed proofs.

\section{Experimental Configuration Details}
\label{appendix:config}

\subsection{Model Configuration}

Table~\ref{tab:model-config} summarizes the model versions and API identifiers used in our experiments,
together with their default reasoning configurations as provided by the respective platforms.
Unless otherwise specified, we use the default inference settings recommended by each provider. 

For specification generation, we set the temperature to 0 to ensure deterministic and reproducible outputs. For proof generation, we set the temperature to 0.7 to balance precision with exploratory diversity, as the task involves multiple rounds of Rocq feedback and revision.

\subsection{Prompt Design}

Our prompt design follows a deliberately minimalistic principle.
Rather than providing detailed heuristics, step-by-step guidance, or domain-specific proof strategies, we restrict prompts to essential task descriptions and formatting constraints.
This design avoids injecting redundant or overly prescriptive instructions, allowing models to independently determine what constitutes a suitable specification and a valid proof.

By minimizing prompt-level guidance, we aim to evaluate the models’ intrinsic ability to synthesize high-quality specifications and construct correct proofs, rather than their capacity to follow handcrafted prompting strategies.
All prompts used in the experiments are reported below:

\clearpage
\begin{table}[t]
\centering
\caption{Model configurations used in our experiments.}
\label{tab:model-config}
\resizebox{0.7 \linewidth}{!}{%
\begin{tabular}{@{}lll@{}}
\toprule
\textbf{Model} & \textbf{API Identifier} & \textbf{Default Reasoning} \\
\midrule
Gemini 3 Pro Preview & \texttt{gemini-3-pro-preview} & High \\
Claude 3.7 Sonnet & \texttt{claude-3-7-sonnet-20250219} & Non-reasoning \\
Claude 4.5 Opus & \texttt{claude-opus-4-5-20251011} & Non-reasoning \\
GPT-4o & \texttt{gpt-4o} & -- \\
GPT-5 & \texttt{gpt-5} & Medium \\
DeepSeek V3.1 & \texttt{deepseek-v3.1} & Non-reasoning \\
\bottomrule
\end{tabular}%
}
\end{table}

\begin{promptbox}{Specification Generation Prompt}
\begin{verbatim}
You are an expert in writing Rocq specifications.
Below is a textual description of a program and its corresponding Python code.
Your task is to generate a precise and complete Rocq specification 
for this program based on the provided information.
Rules
- Including the Require Import statements.
- Do not generate any natural language comments.

Before generating the output, please refer to the following example:
[Example]
    [Input]
        Now, this is the description: 

        def add(A: int, B: int):
            """
            return the sum of A and B.
            """
        ,
        and this is the code: 

            return A + B
        .
    [Output]
        ```coq
        Definition add_spec (A : int) (B : int) (sum : int) : Prop :=
        sum = A + B.
        ```

Now, this is the description: {input_text},
and this is the code: {code}.
Please output only the Rocq specification for this program.
\end{verbatim}
\end{promptbox}

\clearpage
\begin{promptbox}{\textsc{Pass}$_{\text{first}}$ Prompt}
\begin{verbatim}
Please generate a complete Rocq proof for the following specification and test
case.
Specification definition:
{spec_content}
Test case:
{example}
Please generate an Example proof, including the full Proof steps and ending 
with Qed. If additional imports or lemmas are required, please also provide 
them.

Notes:
1. Use appropriate Rocq tactics (such as unfold, simpl, reflexivity etc.)
2. Ensure the proof is complete and correct
3. If real number operations are involved, please use R_scope
4. If integer operations are involved, please use Z_scope
5. For list operations, please use the appropriate List library functions

Please return Rocq code directly, without any additional explanations.
\end{verbatim}
\end{promptbox}

\begin{promptbox}{\textsc{Pass}$_{\text{all}}$ Prompt}
\begin{verbatim}
You are given the full content of an existing Rocq output file for a HumanEval 
spec. It already includes the specification definitions and the proof for the 
first test case.

Existing Rocq output file content
specification for the first test case `{first_test_case}`:
```coq
{existing_output_content}
```
Now, generate a Rocq Example proof for the following new test case {test_case} 
and replace the first test case:

Requirements:

Return the full content of the Rocq output file for the new test case.

Do NOT add any natural language comments.

End with Qed.
\end{verbatim}
\end{promptbox}

\clearpage
\begin{promptbox}{\textsc{Rejected}$_{\text{all}}$ Prompt}
\begin{verbatim}
Please generate a complete Rocq proof for the following specification and test
case.

Specification definition:
{spec_content}

Test case:
{example}

First, judge whether you can generate a valid proof for the given spec and 
test case.
- If YES: Generate the complete Example proof (Proof steps + Qed), with any 
necessary imports or lemmas.
- If NO: Return only "Test case is not accepted".

Do NOT modify the given spec or test case.

Notes:
1. Use appropriate Rocq tactics (such as unfold, simpl, reflexivity etc.)
2. Ensure the proof is complete and correct
3. If real number operations are involved, please use R_scope
4. If integer operations are involved, please use Z_scope
5. For list operations, please use the appropriate List library functions

Please return Rocq code directly, without any additional explanations.
\end{verbatim}
\end{promptbox}

\begin{promptbox}{Equivalence Prompt}
\begin{verbatim}
You are a Rocq proof assistant. Given two Rocq specifications, your task is to 
generate a COMPLETE, STANDALONE Rocq file that proves the first specification 
logically implies the second specification.

First specification ({spec1_name}):
```coq
{spec1}
```
Second specification ({spec2_name}):
```coq
{spec2}
```
Your task is to generate a COMPLETE Rocq file that:
Includes ALL necessary Require Import statements from both specifications
Includes ALL definitions from the first specification
Includes ALL definitions from the second specification
States a theorem that expresses the implication relationship
Provides a complete proof using appropriate Rocq tactics
Is syntactically correct and can be verified by Rocq independently
Rules:

The generated code must be a COMPLETE, STANDALONE Rocq file

Generate ONLY Rocq code, without any explanations
\end{verbatim}
\end{promptbox}

\begin{promptbox}{Error Feedback Prompt}
\begin{verbatim}
IMPORTANT: The previous proof attempt failed with the following error:
{error_info}

Please fix the proof based on this error information and generate a corrected 
version.
\end{verbatim}
\end{promptbox}

\section{Statistics of LLM-Generated Specifications}
\label{appendix:sat}

\begin{figure}[h]
  \centering
  \begin{subfigure}[b]{0.49\linewidth}
    \centering
    \includegraphics[width=\linewidth]{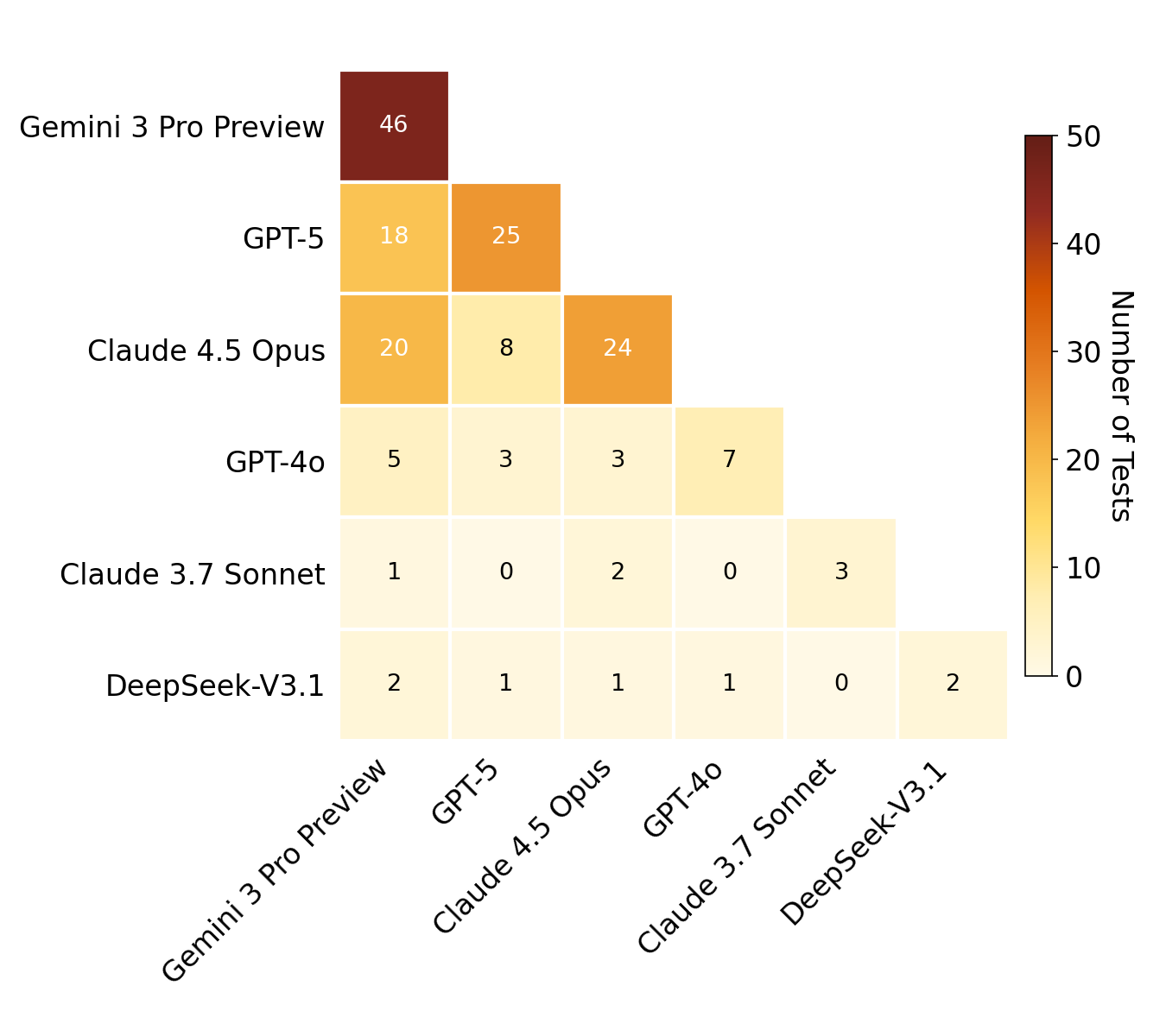}
    \caption{Pairwise intersection of candidates.}
    \label{fig:intersection}
  \end{subfigure}
  \hfill
  \begin{subfigure}[b]{0.49\linewidth}
    \centering
    \includegraphics[width=\linewidth]{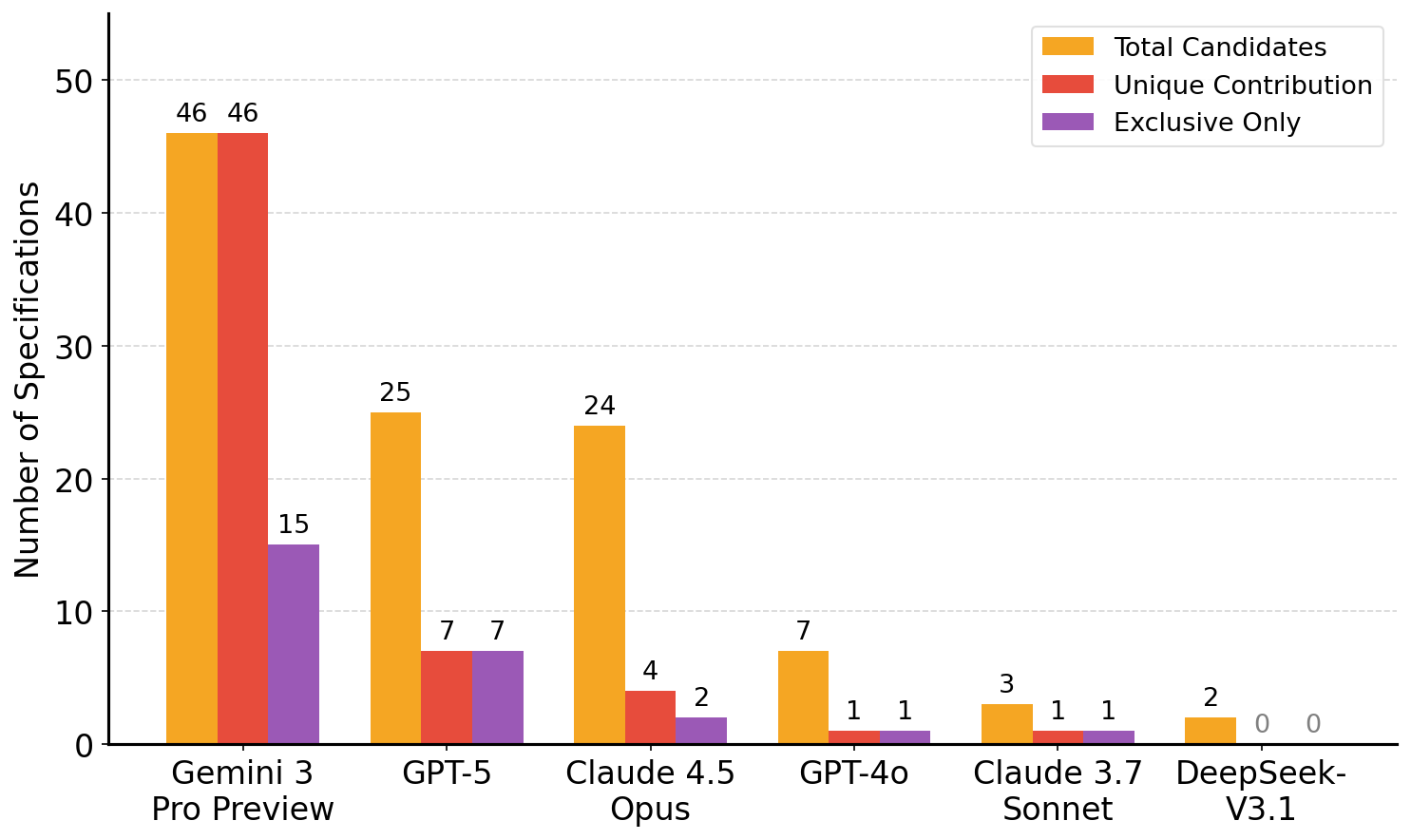}
    \caption{Statistics of generated specifications per model.}
    \label{fig:unique}
  \end{subfigure}
  \caption{Model performance on candidate specification generation. (a) Diagonal entries show each model's total candidates; off-diagonal entries show shared successes. (b) Total candidates, unique contributions to the aggregated set, and exclusive successes (solved by that model only).}
  \label{fig:model_comparison}
\end{figure}

Figure~\ref{fig:model_comparison} reports detailed statistics of candidate specifications generated by each LLM. We analyze both the total number of valid specifications per model and their overlap structure to better understand model complementarity under our aggregation strategy.

Figure~\ref{fig:intersection} presents the pairwise intersection matrix. Diagonal entries correspond to the total number of valid specifications produced by each model, while off-diagonal entries indicate shared successes. The results reveal substantial but incomplete overlap among top-performing models. Gemini~3~Pro~Preview shares 18 specifications with GPT-5 and 20 with Claude~4.5~Opus, reflecting its broad coverage. In contrast, GPT-5 and Claude~4.5~Opus share only 8 specifications despite having comparable total counts, indicating that they capture different regions of the specification space. Lower-performing models exhibit limited overlap both with each other and with the top-tier models, and their successes are largely subsumed by stronger models.

Figure~\ref{fig:unique} further decomposes each model’s contribution into total candidates, unique contributions to the aggregated set, and exclusive successes (i.e., problems solved only by that model). Gemini~3~Pro~Preview achieves the strongest performance with 46 valid specifications, including 15 exclusive successes not covered by any other model. GPT-5 and Claude~4.5~Opus generate 25 and 24 valid specifications, respectively; however, GPT-5 contributes more exclusive specifications (7 vs.\ 2), suggesting greater diversity relative to other models. The remaining models—GPT-4o, Claude~3.7~Sonnet, and DeepSeek-V3.1—produce only a small number of valid specifications and add limited unique coverage.

Overall, these statistics motivate our priority-based aggregation strategy described in the main text. Although individual models differ in overall capability, their partial overlap and complementary strengths imply that no single model dominates across all problems. Aggregating candidates by model ranking therefore improves overall coverage while retaining the highest-quality specification available for each problem.

\section{RQ5: How Do LLM-Generated Specifications Compare to Human-Written Ones?}
\label{appendix:human-llm}
\label{appendix:human-llm-detail}

To assess the quality of LLM-generated specifications relative to human-written ones, we conduct a systematic comparison using equivalence proofs. We first aggregate candidate specifications from all models, then attempt to prove logical equivalence between LLM-generated and human-written specifications.

\textbf{Candidate LLM-Generated Specification Aggregation.}
To maximize coverage, we aggregate candidates using priority-based selection: for each problem, we select the specification from the highest-ranked model (by \textsc{Reject}$_{\text{all}}$). This yields 60 unique specifications, covering 36.6\% of the benchmark. Detailed statistics are reported in Appendix~\ref{appendix:sat}.

\textbf{Candidate Human Specifications.}
We validate human-written specifications through Gemini 3 Pro Preview's \textsc{Reject}$_{\text{all}}$ criterion, yielding 50 candidates (30.5\% coverage)—comparable to the 59 LLM-generated specifications (36.0\%), suggesting frontier models can produce formal specifications at a rate approaching human experts.

\begin{figure}[t]
    \centering
    \includegraphics[width=0.55\linewidth]{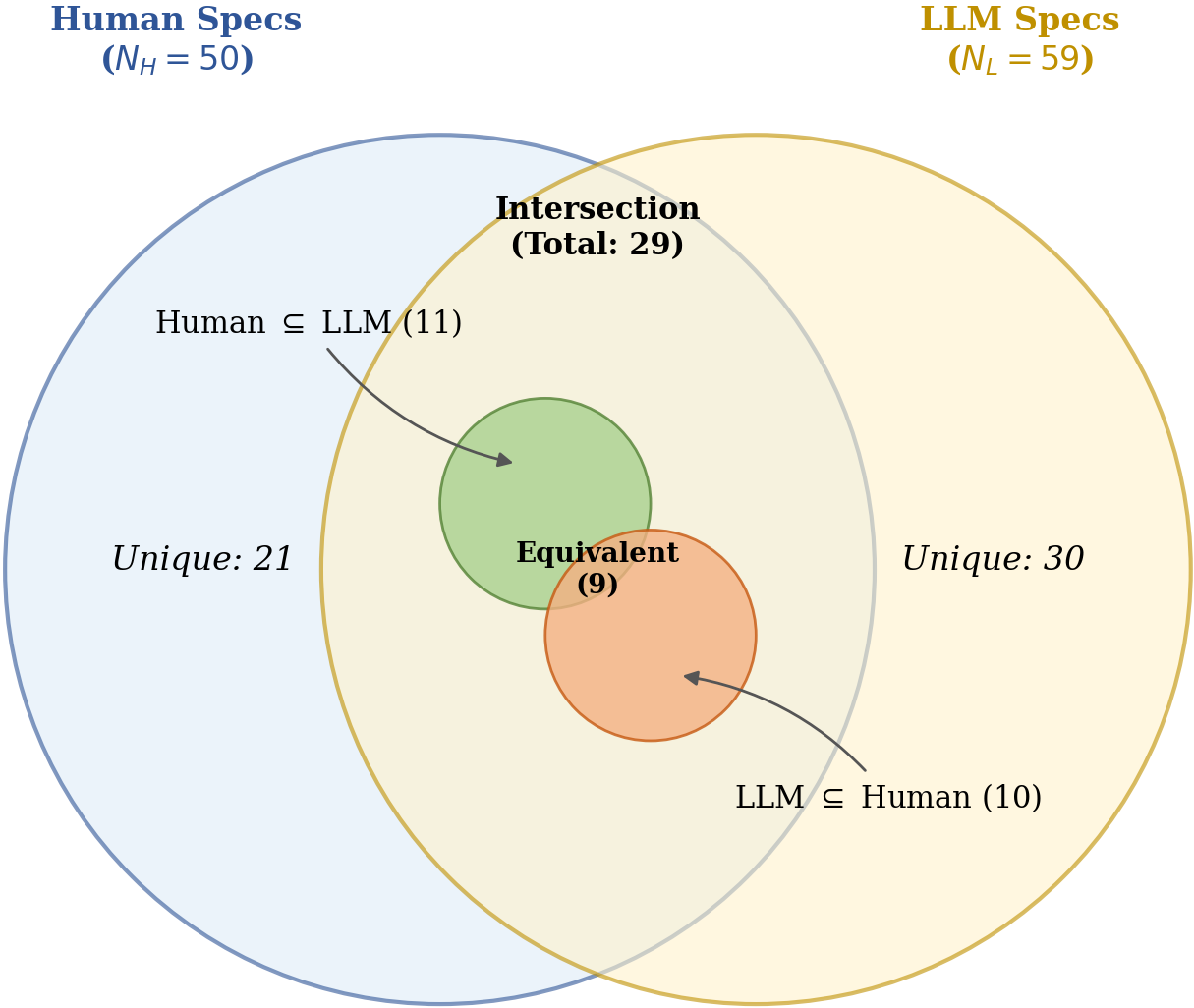}
    \caption{Equivalence analysis between human ($N_H=50$) and LLM ($N_L=59$) specifications. Among 29 overlapping problems, only 9 yield provably equivalent specifications.}
    \label{fig:equivalence_venn}
\end{figure}



\paragraph{Equivalence Analysis.}

Figure~\ref{fig:equivalence_venn} shows the relationship between 50 validated human and 59 LLM specifications. Of 80 unique problems covered, only 29 (37.2\%) have valid specifications from both sources. Within this intersection, 9 (31.0\%) yield provably equivalent specifications with both implication directions formally verified, 3 cases succeed in only one direction, and the remaining 17 cases fail to prove either direction. We manually inspected all 3 single-direction cases and confirmed that the specifications are indeed semantically equivalent—the missing direction fails due to prover limitations rather than genuine semantic differences. This confirms Finding~\ref{trade-off} that equivalence proving introduces systematic underestimation and is unsuitable as a primary quality metric.

The limited overlap between human and LLM successes suggests complementary strengths: they succeed on distinct problem subsets rather than merely differing in capability level, indicating LLM-based specification synthesis remains a challenging open problem.

\section{RQ6: Is Formal Specification Generation More Like Math or Code?}
\label{appendix:rq6}
\begin{figure}[h]
    \centering
    \includegraphics[width=0.9\linewidth]{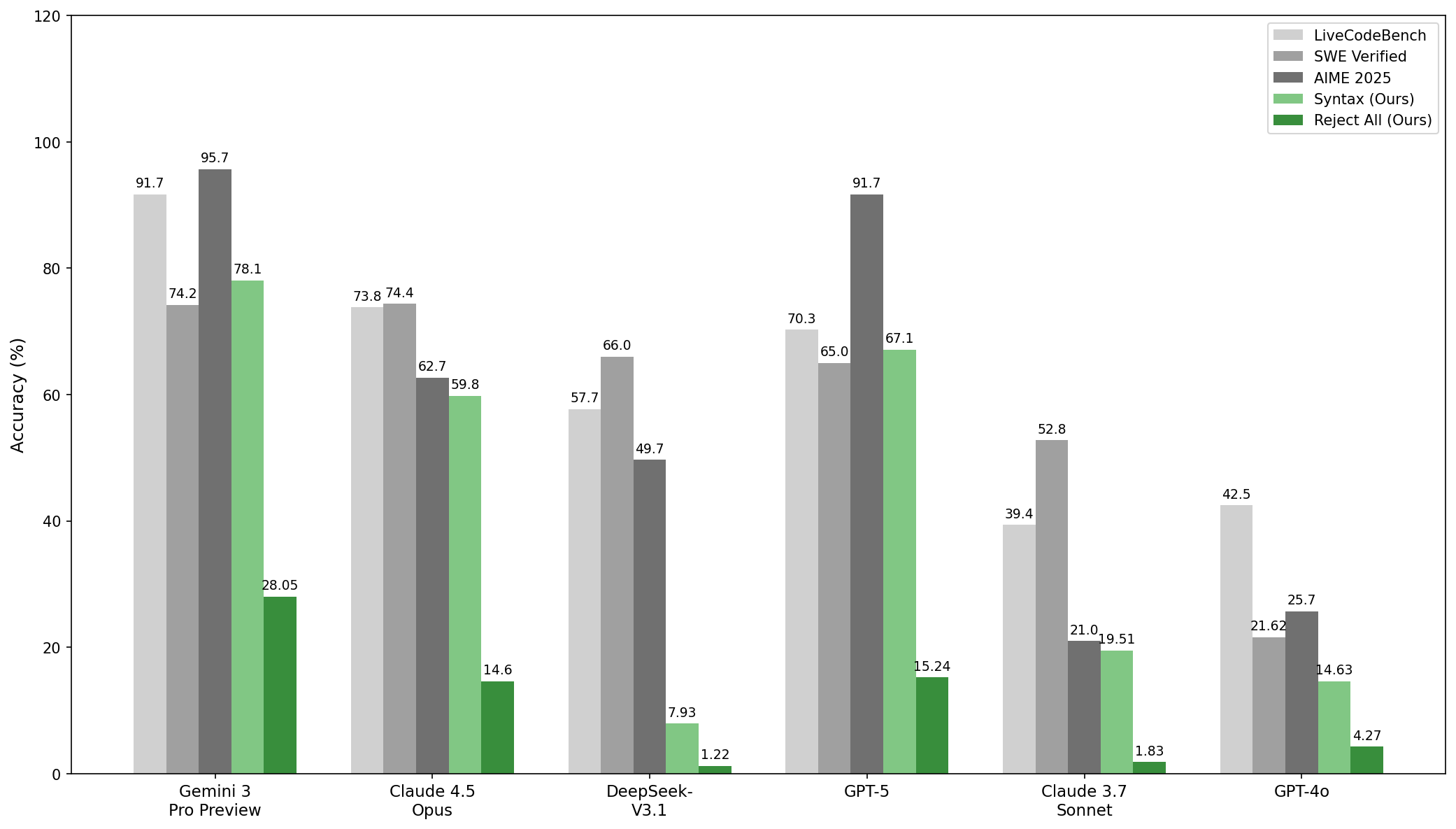}
    \caption{Comparison of model performance across code generation (LiveCodeBench, SWE-bench Verified), mathematical reasoning (AIME 2025), and formal specification generation (Syntax and Reject All on our benchmark). Models are ordered by release date (newest first).}
    \label{fig:benchmark-comparison}
\end{figure}



Figure~\ref{fig:benchmark-comparison} reveals a striking divergence between established benchmark performance and formal specification quality. While newer frontier models consistently outperform their predecessors on LiveCodeBench, SWE-bench Verified, and AIME 2025—with Gemini 3 Pro Preview (91.7\%, 74.2\%, 95.7\%) and GPT-5 (70.3\%, 65.0\%, 91.7\%) leading across the board, and older models like GPT-4o (42.5\%, 21.62\%, 25.7\%) and Claude 3.7 Sonnet (39.4\%, 52.8\%, 21.0\%) trailing behind—this generational improvement does not reliably transfer to our specification metrics.

Most notably, DeepSeek-V3.1 exemplifies this disconnect: despite strong performance on SWE-bench Verified (66.0\%)—comparable to GPT-5 (65.0\%) and surpassing Claude 3.7 Sonnet (52.8\%)—it achieves the lowest Syntax score (7.93\%) and Reject All score (1.22\%) among all evaluated models. Similarly, Claude 4.5 Opus achieves the highest SWE-bench score (74.4\%) yet underperforms on both Syntax (59.8\%) and Reject All (14.6\%) compared to Gemini 3 Pro Preview (78.1\%, 26.2\%). Interestingly, Gemini 3 Pro Preview and GPT-5, the two models with the strongest AIME 2025 scores (95.7\% and 91.7\%), also achieve the best specification quality, suggesting that formal specification generation may be more closely related to mathematical reasoning than to software engineering and coding capabilities.

Remarkably, even on HumanEval—a benchmark where frontier models have long achieved near-saturation on code generation—our specification metrics reveal substantial performance gaps among top-tier models: 18.3\% on Syntax and 11.6\% on Reject All, comparable to the differentiation observed on challenging benchmarks like LiveCodeBench (21.4\% gap) and SWE-bench Verified (9.4\% gap). This demonstrates that generating tight formal specifications remains a discriminative task that exposes meaningful capability differences, even when the underlying programming problems are considered solved.

\section{RQ7: What Are the Principal Barriers to Syntactic Correctness in Rocq?}
\label{appendix:rq7}

While state-of-the-art models achieve relatively high syntax compliance, a non-negligible fraction of generated specifications still fails to compile (as reflected in the \textsc{Syntax} metric in Table~\ref{tab:generation}). To understand the root causes, we conducted a qualitative analysis of all 156 specifications that failed the \textsc{Syntax} stage across the top three performing models: Claude 4.5 Opus, GPT-5, and Gemini 3 Pro Preview. We classified the errors into three primary categories:

\begin{itemize}
    \item \textbf{Type Mismatches (45\%):} Logical errors where the model conflates distinct Rocq types, such as Peano natural numbers (\texttt{nat}), binary integers (\texttt{Z}), and propositions (\texttt{Prop}).
    \item \textbf{Missing Dependencies (22\%):} Cases where the model invokes semantically appropriate functions (e.g., from \texttt{Ascii} or \texttt{ZArith}) but fails to include necessary \texttt{Require Import} statements.
    \item \textbf{Ill-formed Recursion (17\%):} \texttt{Fixpoint} definitions that violate the \textit{Calculus of Inductive Constructions}' strict structural termination requirements, often by recursing on non-syntactic sub-terms.
\end{itemize}

\paragraph{Error Distribution and Analysis.}
Table~\ref{tab:error-analysis} summarizes the error distribution. These failures reflect a fundamental tension between the probabilistic nature of LLMs and the rigid logical kernels of proof assistants.

\begin{table}[h]
\centering
\caption{Breakdown of compilation errors (N=156) by category.}
\label{tab:error-analysis}
\vskip 0.1in
\begin{tabular}{lccccc}
\toprule
\textbf{Model} & \textbf{Type Mismatch} & \textbf{Missing Dep.} & \textbf{Bad Recursion} & \textbf{Others} & \textbf{Total} \\
\midrule
Claude 4.5 Opus & 24 & 19 & 12 & 11 & 66 \\
GPT-5           & 29 & 7  & 9  & 9  & 54 \\
Gemini 3 Pro Preview   & 17 & 8  & 5  & 6  & \textbf{36} \\
\midrule
\textbf{Total}  & \textbf{70} (45\%) & \textbf{34} (22\%) & \textbf{26} (17\%) & \textbf{26} (17\%) & 156 \\
\bottomrule
\end{tabular}
\end{table}

\textbf{Type System Rigidity vs. Language Bias.}
Type Mismatches remain the dominant failure. Models frequently treat \texttt{nat} and \texttt{Z} interchangeably, likely an artifact of pre-training on dynamically typed languages like Python. Unlike those languages, Rocq requires explicit casting (e.g., \texttt{Z.of\_nat}) due to the fundamentally different inductive structures of these types.

\textbf{Context Management and Imports.}
Claude 4.5 Opus exhibited a notably higher rate of Missing Dependencies (19 cases). This suggests a "knowledge-action gap": the model understands the semantic need for a library function but neglects the syntactic boilerplate required to make it available in the current environment. Gemini 3 Pro Preview demonstrated more robust context awareness in this regard.

\textbf{The Structural Termination Gap.}
All models struggle with Ill-formed Recursive Definitions (17\% of errors). Rocq's guard checker requires recursive functions to be structurally terminating, meaning recursive calls must be performed on direct syntactic sub-terms of the inductive input. We observed that models often generate ``Pythonic'' recursion patterns that are mathematically sound but formally rejected. 

In this section, we provide a concrete example of the \textbf{Ill-formed Recursion} failure mode, which accounts for approximately 17\% of compilation errors. Figure~\ref{fig:bad_recursion_case} illustrates a case where the model attempts to generate a specification for a sequence similar to Fibonacci (specifically, the \texttt{fib4} sequence).

\begin{figure}[h]
    \centering
    \begin{tcolorbox}[
        enhanced,
        title=\textbf{Case Study: Ill-formed Recursion in \texttt{fib4\_seq}},
        fonttitle=\bfseries\small,
        colback=blue!3!white,   
        colframe=blue!50!black,    
        boxrule=0.8pt,
        arc=2pt,
        breakable
    ]
    
    \textbf{Generated Specification:}
    \begin{lstlisting}[
        language=Coq, 
        frame=single, 
        framerule=0pt,             
        backgroundcolor=\color{white}, 
        basicstyle=\ttfamily\small
    ]
Require Import Coq.ZArith.ZArith.
Require Import Coq.Arith.PeanoNat.

Open Scope Z_scope.

Fixpoint fib4_seq (n : nat) : Z :=
  match n with
  | 0%nat => 0
  | 1%nat => 0
  | 2%nat => 2
  | 3%nat => 0
  | S (S (S (S m))) => fib4_seq (S (S (S m))) + fib4_seq (S (S m)) + 
                       fib4_seq (S m) + fib4_seq m
  end.

Definition fib4_spec (n : Z) (res : Z) : Prop :=
  n >= 0 /\ res = fib4_seq (Z.to_nat n).
    \end{lstlisting}

    \vspace{0.2cm}
    
    \textbf{Rocq Compiler Error:}
    \begin{tcolorbox}[
        colback=red!5!white,
        colframe=red!75!black,
        left=2pt, right=2pt, top=2pt, bottom=2pt,
        arc=0mm, boxrule=0.5pt,
        title=Compiler Output,
        fonttitle=\bfseries\scriptsize,
        coltitle=white
    ]
    \footnotesize\ttfamily\color{red!40!black}
    Recursive definition of fib4\_seq is ill-formed.\\
    In environment\\
    fib4\_seq : nat -> Z\\
    ...\\
    Recursive call to fib4\_seq has principal argument equal to "S m" instead of one of the following variables: "n0" "n1" "n2" "m".
    \end{tcolorbox}
    
    \end{tcolorbox}
    
    \caption{\textbf{Example of Ill-formed Recursion.} The model attempts to implement a recurrence relation $f(n) = \sum_{i=1}^{4} f(n-i)$ but fails Rocq's structural termination check. Rocq requires recursive arguments to be strict subterms of the matched pattern (e.g., \texttt{m}), but the model passes \texttt{S m} (conceptually $n-3$), which Rocq cannot automatically verify as terminating without additional measures.}
    \label{fig:bad_recursion_case}
\end{figure}
\section{RQ8: Does Input Data Complexity Dictate the Success of Formal Specification Generation?}
\label{appendix:rq8}

To investigate whether the complexity of the input data structure influences the difficulty of generating precise specifications, we categorized the 164 HumanEval problems based on their input argument types. 
We analyze the relationship between these input categories and the models' ability to generate \textbf{candidate specifications}---defined as specifications that satisfy the \textsc{Reject}$_{\text{all}}$ criterion (i.e., they accept all positive test cases and successfully reject negative test cases).

We grouped the input types into five distinct categories:
\begin{itemize}
    \item \textbf{Basic Numeric:} Scalars involving \texttt{Z}, \texttt{nat}, \texttt{float}, and \texttt{R}.
    \item \textbf{String:} Pure string manipulation tasks.
    \item \textbf{List Numeric:} Lists containing numeric types (\texttt{list Z}, \texttt{list nat}, etc.).
    \item \textbf{List String:} Lists of strings (\texttt{list string}).
    \item \textbf{Complex/Nested:} Nested structures (e.g., \texttt{list list Z}), dictionaries, or custom types.
\end{itemize}

We categorize the problems into three groups based on model performance:
\begin{itemize}
    \item \textbf{Weak Accessible}: Problems where at least one of the ``Weak Models'' (GPT-4o, Claude 3.7 Sonnet, DeepSeek-V3.1) successfully generated a candidate specification; 
    \item \textbf{Strong Exclusive}: Problems where valid candidates were produced \emph{only} by the ``Strong Models'' (Gemini 3 Pro Preview, GPT-5, Claude 4.5 Opus); 
    \item \textbf{No Valid Spec}: Problems where no model across either group achieved the \textsc{Reject}$_{\text{all}}$ criterion.
\end{itemize}

\subsection{Quantitative Results}

Table~\ref{tab:input-type-breakdown} presents the distribution of valid specifications across input types. The data reveals a strong correlation between input complexity and the ability to generate sufficiently precise specifications that pass the \textsc{Reject}$_{\text{all}}$ filter.

\begin{table}[ht]
  \caption{Distribution of Valid Candidate Specifications (Passing \textsc{Reject}$_{\text{all}}$) by Input Type. \textbf{Weak Accessible} denotes problems where at least one weak model produced a valid spec. \textbf{Strong Exclusive} denotes problems where only strong models produced valid specs. \textbf{No Valid Spec} denotes problems where no model passed \textsc{Reject}$_{\text{all}}$.}
  \label{tab:input-type-breakdown}
  \begin{center}
    \begin{small}
      \begin{sc}
        \begin{tabular}{lcccc}
          \toprule
          Input Category & Total & Weak Accessible & Strong Exclusive & No Valid Spec \\
          \midrule
          Basic Numeric   & 43 & 6 (14.0\%) & 9 (20.9\%)           & 28 (65.1\%) \\
          String          & 49 & 2 (4.1\%)  & \textbf{20 (40.8\%)} & 27 (55.1\%) \\
          List Numeric    & 56 & 2 (3.6\%)  & 16 (28.6\%)          & 38 (67.9\%) \\
          List String     & 9  & 1 (11.1\%) & 1 (11.1\%)           & 7 (77.8\%)  \\
          Complex/Nested  & 7  & 0 (0.0\%)  & 2 (28.6\%)           & 5 (71.4\%)  \\
          \midrule
          \textbf{Total}  & \textbf{164} & \textbf{11 (6.7\%)} & \textbf{48 (29.3\%)} & \textbf{105 (64.0\%)} \\
          \bottomrule
        \end{tabular}
      \end{sc}
    \end{small}
  \end{center}
  \vskip -0.1in
\end{table}

\subsection{Analysis of Failure Modes}

\paragraph{The String Gap: Frontier Models' Moat.}
The most striking disparity appears in the \textbf{String} category. While weak models achieved a negligible 4.1\% success rate in generating valid candidates, strong models exclusively covered 40.8\% of string problems. 
In Rocq, specifying string properties requires handling the inductive \texttt{string} type (essentially a list of ASCII characters). Weak models often fail to satisfy \textsc{Reject}$_{\text{all}}$ because they generate overly permissive specifications (e.g., using Pythonic logic that doesn't strictly constrain the Rocq string behavior) or fail to construct the necessary inductive predicates to rule out incorrect implementations. Frontier models demonstrate a superior ability to formalize these structural constraints precisely.

\paragraph{Arithmetic vs. Structural Reasoning.}
Weak models perform best in the \textbf{Basic Numeric} category (6 out of their 11 successes). This suggests that their ability to generate precise specifications is largely limited to First-Order Arithmetic. However, performance drops precipitously when shifting to \textbf{List Numeric} types (3.6\% success).
For lists, passing \textsc{Reject}$_{\text{all}}$ requires the specification to be strict enough to reject incorrect list manipulations (e.g., off-by-one errors or incorrect permutations). Weak models struggle to synthesize the complex `Forall` or recursive predicates needed to tightly constrain list behaviors, resulting in specifications that are either too loose (failing \textsc{Reject}$_{\text{all}}$) or syntactically invalid.

\paragraph{The Complexity Barrier.}
As input complexity increases to \textbf{List String} and \textbf{Complex/Nested} types, the failure rate for generating valid specifications rises to over 70\% across all models. Problems involving nested data structures (e.g., \texttt{list list nat}) require specifications that handle multiple layers of induction. The high failure rate in the ``No Valid Spec'' column for these categories indicates that even the strongest models currently struggle to generate formal specifications that are precise enough to distinguish correct implementations from incorrect ones when the data structure complexity exceeds basic scalar or linear types.

\section{RQ9: Why Do LLMs Fail to Prove Valid Specifications on Test Cases?}
\label{appendix:rq9}

While Gemini 3 Pro Preview demonstrates impressive capability in verifying individual test cases (achieving 74.39\% on \textsc{Pass}$_{\text{first}}$ with human specifications), its performance drops significantly when scaling to the full test suite (\textsc{Pass}$_{\text{all}}$). This gap suggests that verifying a diverse set of inputs introduces complexities that are not present in the first (often simple) test case.

Through a qualitative analysis of failed verification attempts where the specification was known to be correct (or passed the first case), we identified three recurrent failure modes specific to the LLM's behavior as a theorem prover.

\subsection{Phantom Goal Errors: Tactic-State Mismatch}
A frequent error occurs when the model attempts to discharge verification conditions involving multiple logical conjunctions (e.g., $A \land B \land C$). The model often employs the \texttt{repeat split} tactic, assuming it must explicitly prove every sub-goal. However, Rocq's automation (specifically \texttt{split} combined with internal simplifications) often automatically discharges trivial sub-goals immediately.

As illustrated in Figure~\ref{fig:phantom_goal_failure}, the model anticipates three sub-goals corresponding to the specification structure. It generates a proof script with three bullets. However, Rocq solves the first two sub-goals instantly, leaving only the third complex goal. The model then attempts to apply the tactic for the first phantom goal (e.g., \texttt{reflexivity}) to the actual remaining goal (e.g., a \texttt{forall} quantifier), causing a unification error.

\begin{figure}[t!]
    \centering
    \begin{tcolorbox}[
        enhanced,
        title=\textbf{Failure Mode 1: The Phantom Goal Error (HumanEval/10)},
        fonttitle=\bfseries\small,
        colback=blue!3!white,    
        colframe=blue!50!black,  
        boxrule=0.8pt,
        arc=2pt,
        before skip=10pt,
        after skip=10pt
    ]
    
    \textbf{Specification Snippet:}
    \vspace{2pt}
    
    \begin{tcolorbox}[
        colback=white,
        colframe=white,
        frame hidden,       
        boxrule=0pt,
        sharp corners,
        top=3pt, bottom=3pt, left=3pt, right=3pt
    ]
    \begin{lstlisting}[
        language=Coq, 
        frame=none, 
        basicstyle=\ttfamily\scriptsize, 
        aboveskip=0pt,
        belowskip=0pt,
        breaklines=true,
        backgroundcolor={} 
    ]
Definition problem_10_spec (input output : string) : Prop :=
  prefix input output = true /\      (* Part 1: Trivial for empty string *)
  palindrome output /\               (* Part 2: Trivial for empty string *)
  forall p : string, ...             (* Part 3: Complex minimality condition *)
    \end{lstlisting}
    \end{tcolorbox}

    \vspace{0.2cm}
    
    \textbf{The Mismatch:}
    \vspace{2pt}
    
    \begin{tcolorbox}[
        colback=white,
        colframe=white,
        frame hidden,
        boxrule=0pt,
        sharp corners,
        top=3pt, bottom=3pt, left=3pt, right=3pt
    ]
    \begin{lstlisting}[
        language=Coq, 
        frame=none, 
        basicstyle=\ttfamily\scriptsize,
        aboveskip=0pt,
        belowskip=0pt,
        breaklines=true,
        escapechar=!,
        backgroundcolor={} 
    ]
(* Test case: input = "", output = "" *)
Example test_problem_10_empty : problem_10_spec "" "".
Proof.
  unfold problem_10_spec.
  repeat split.
  (* AT THIS POINT: Rocq solves Part 1 & 2 automatically. *)
  (* Only Part 3 remains. But the model thinks all 3 goals exist. *)

  - (* Model attempts to prove Part 1 (prefix "" "" = true) *)
    simpl.
    reflexivity. 
    (* ERROR: Rocq tries to apply reflexivity to Part 3 (forall p...) *)
    
  - (* Model code for Part 2... (Dead code) *)
  - (* Model code for Part 3... (Dead code) *)
Qed.
    \end{lstlisting}
    \end{tcolorbox}

    \vspace{0.2cm}
    
    \textbf{Rocq Compiler Error:}
    \vspace{2pt}
    \begin{tcolorbox}[
        colback=red!5!white,
        colframe=red!75!black,
        left=2pt, right=2pt, top=2pt, bottom=2pt,
        arc=0mm, boxrule=0.5pt,
        title=Compiler Output,
        fonttitle=\bfseries\scriptsize,
        coltitle=white
    ]
    \footnotesize\ttfamily\color{red!40!black}
        In environment...\\
    Unable to unify "String.length p" with "0".
    \end{tcolorbox}
    \end{tcolorbox}
    
    \caption{\textbf{Tactic-State Mismatch Example.} The model hallucinates a proof state with three sub-goals. Because Rocq automatically solved the first two, the model applies the tactic for the first goal (\texttt{reflexivity}) to the third goal (an inequality), leading to a unification failure.}
    \label{fig:phantom_goal_failure}
\end{figure}

\subsection{The Library Barrier: Complex Inductive Predicates}
When specifications involve complex standard library predicates—specifically \texttt{Permutation} (for sorting/reordering tasks) or \texttt{NoDup} (for uniqueness constraints)—the model's proof capability degrades sharply. 
Unlike basic arithmetic which can often be solved by \texttt{lia}, proving properties about permutations requires selecting the correct lemmas from the \texttt{Coq.Sorting.Permutation} library. The model frequently attempts to unfold these definitions and prove them from scratch using low-level tactics, which leads to unmanageably long proof contexts or getting stuck in inductive steps that require auxiliary lemmas.

As illustrated in Figure~\ref{fig:library_barrier_failure}, the model often fails to utilize high-level automation and instead resorts to brittle manual proof constructions.

\begin{figure}[t!]
    \centering
    \begin{tcolorbox}[
        enhanced,
        title=\textbf{Failure Mode 2: Inability to Discharge Library Predicates (HumanEval/33)},
        fonttitle=\bfseries\small,
        colback=blue!3!white,    
        colframe=blue!50!black,  
        boxrule=0.8pt,
        arc=2pt,
        before skip=10pt,
        after skip=10pt
    ]
    
    \textbf{Specification Snippet:}
    \vspace{2pt}
    
    \begin{tcolorbox}[
        colback=white,
        colframe=white,
        frame hidden,       
        boxrule=0pt,
        sharp corners,
        top=3pt, bottom=3pt, left=3pt, right=3pt
    ]
    \begin{lstlisting}[
        language=Coq, 
        frame=none, 
        basicstyle=\ttfamily\scriptsize, 
        aboveskip=0pt,
        belowskip=0pt,
        breaklines=true,
        backgroundcolor={} 
    ]
Definition problem_33_spec (input output : list Z) : Prop :=
  (* 1. input is a Permutation of output *)
  Permutation input output /\
  (* 2. values at non-divisible indices are preserved *)
  (forall (i : nat), ...) /\ 
  (* 3. elements at indices divisible by 3 are sorted *)
  (forall (i j : nat), ...).
    \end{lstlisting}
    \end{tcolorbox}

    \vspace{0.2cm}
    
    \textbf{The Mismatch:}
    \vspace{2pt}
    
    \begin{tcolorbox}[
        colback=white,
        colframe=white,
        frame hidden,
        boxrule=0pt,
        sharp corners,
        top=3pt, bottom=3pt, left=3pt, right=3pt
    ]
    \begin{lstlisting}[
        language=Coq, 
        frame=none, 
        basicstyle=\ttfamily\scriptsize,
        aboveskip=0pt,
        belowskip=0pt,
        breaklines=true,
        escapechar=|,
        backgroundcolor={} 
    ]
Example test_case_problem_33 : problem_33_spec [2; 10; 20...] [2; 10; 20...].
Proof.
  unfold problem_33_spec.
  split.
  - (* Part 1: Permutation. Model tries manual construction *)
    apply Permutation_cons. 
    (* This tactic creates subgoals: head equality (2=2) and tail Permutation *)
    apply Permutation_cons.
    (* ERROR: Model applies Permutation_cons again, but likely 
       focused on the equality subgoal (2%Z = 2%Z) instead of the list *)
    apply Permutation_trans.
    
  - (* Part 2... *)
    \end{lstlisting}
    \end{tcolorbox}

    \vspace{0.2cm}
    
    \textbf{Rocq Compiler Error:}
    \vspace{2pt}
    \begin{tcolorbox}[
        colback=red!5!white,
        colframe=red!75!black,
        left=2pt, right=2pt, top=2pt, bottom=2pt,
        arc=0mm, boxrule=0.5pt,
        title=Compiler Output,
        fonttitle=\bfseries\scriptsize,
        coltitle=white
    ]
    \footnotesize\ttfamily\color{red!40!black}
    Unable to unify\\
    "forall x y : ?M1435, x = y -> Morphisms.respectful ..."\\
    with "2\%Z = 2\%Z".
    \end{tcolorbox}
    \end{tcolorbox}
    
    \caption{\textbf{Library Predicate Failure.} Instead of using automation (e.g., \texttt{Permutation\_refl}), the model attempts a brittle manual proof. The error occurs when it tries to apply a Permutation lemma to a subgoal that requires proving numerical equality (\texttt{2\%Z = 2\%Z}), revealing a lack of awareness regarding the proof state evolution of complex library predicates.}
    \label{fig:library_barrier_failure}
\end{figure}

\subsection{Resource Exhaustion on Large Inputs}
The \textsc{Pass}$_{\text{all}}$ phase introduces generated test cases with significantly larger numerical values than the canonical examples. 
When specifications are defined using Peano natural numbers (\texttt{nat}) or structurally recursive functions without tail-call optimization, verifying large inputs triggers \textbf{Stack Overflow} errors during Rocq's normalization process (e.g., using \texttt{simpl} or \texttt{compute}). Even when using binary integers (\texttt{Z}), models sometimes generate inefficient proof scripts that attempt to unfold computations explicitly rather than using efficient reduction strategies like \texttt{vm\_compute}, causing the verifier to crash or time out.

\section{RQ10: Does Syntactic Validity Guarantee Semantic Correctness?}
\label{appendix:rq10}

A core motivation for our execution-based evaluation is the observation that syntactic validity (compilation) is a necessary but insufficient condition for specification quality. In our experiments, we encountered cases where LLMs generated specifications that successfully compiled in Rocq but contained fundamental semantic flaws. These flaws often manifest as logical contradictions or overly restrictive constraints that rule out valid program behaviors.

This phenomenon highlights a critical failure mode: the model knows how to write valid Rocq code structure but fails to translate the natural language intent into a consistent logical reality. Because these specifications are syntactically perfect, they pass initial static checks, making them particularly deceptive. They only reveal their defects when instantiated against concrete test cases during the verification phase.

Figure \ref{fig:semantic_flaw_case} presents a representative case study from HumanEval/12 (\texttt{longest}). The task requires returning the first string of maximum length. The specification generated by Claude 4.5 Opus compiles without errors. However, it introduces a subtle logic error in how it defines the first occurrence. The specification inadvertently enforces a constraint that requires the result's index to be smaller than or equal to \textit{any} index of a max-length string. As shown in the trace, this logic collapses when the input list contains duplicate max-length strings (e.g., \texttt{["a", "a"]}), creating a mathematical contradiction ($1 \le 0$). This case underscores that specification generation is not just a translation task but requires rigorous semantic consistency, which LLMs still struggle to guarantee.

\begin{figure}[t!] 
    \centering
    \begin{tcolorbox}[
        enhanced,
        title=\textbf{Case Study: Syntactically Valid but Semantically Incorrect Spec},
        fonttitle=\bfseries\small,
        colback=blue!3!white,    
        colframe=blue!50!black,  
        boxrule=0.8pt,
        arc=2pt,
        before skip=10pt,
        after skip=10pt
    ]
    
    \textbf{Generated Specification (Compiles Successfully):}
    \vspace{2pt}
    
    \begin{tcolorbox}[
        colback=white,
        colframe=white,
        frame hidden,       
        boxrule=0pt,
        sharp corners,
        top=2pt, bottom=2pt, left=2pt, right=2pt
    ]
    \begin{lstlisting}[
        language=Coq, 
        frame=none, 
        basicstyle=\ttfamily\fontsize{7.5pt}{8.5pt}\selectfont, % 微调字号确保显示全
        aboveskip=0pt,
        belowskip=0pt,
        breaklines=true,
        backgroundcolor={} 
    ]
Definition longest_spec (strings : list string) (result : option string) : Prop :=
  match strings with
  | [] => result = None
  | _ => let maxlen := max_length strings in
    exists s, result = Some s /\ In s strings /\ string_length s = maxlen /\
      (* Logic Error: Universal quantifier creates contradiction for duplicates *)
      (forall s'' idx1 idx2, 
        In s'' strings -> string_length s'' = maxlen ->
        nth_error strings idx1 = Some s ->   (* s is the result *)
        nth_error strings idx2 = Some s'' -> (* s'' is any max-len string *)
        idx1 <= idx2)
  end.
    \end{lstlisting}
    \end{tcolorbox}

    \vspace{0.2cm}
    
    \textbf{Why the Logic Fails (Trace):}
    \vspace{2pt}
    
    \begin{tcolorbox}[
        colback=white,
        colframe=white,
        frame hidden,
        boxrule=0pt,
        sharp corners,
        top=2pt, bottom=2pt, left=2pt, right=2pt
    ]
    \begin{lstlisting}[
        language=Coq, 
        frame=none, 
        basicstyle=\ttfamily\fontsize{7.5pt}{8.5pt}\selectfont,
        aboveskip=0pt,
        belowskip=0pt,
        breaklines=true,
        escapechar=!,
        backgroundcolor={} 
    ]
(* Input: strings = ["a"; "a"], maxlen = 1. Implementation returns index 0 ("a"). *)
(* Spec requires condition for ALL index pairs pointing to max-len strings. *)
(* Note: "a" appears at both index 0 and index 1. *)

Variable idx1 := 1. (* "a" at index 1 *)
Variable idx2 := 0. (* "a" at index 0 *)

(* The Spec Constraint: idx1 <= idx2 *)
(* Substitution:        1 <= 0       *)
(* Result: False. Verification fails regardless of prover power. *)
    \end{lstlisting}
    \end{tcolorbox}

    \vspace{0.2cm}
    
    \end{tcolorbox}
    
    \caption{\textbf{Example of Semantic Error in HumanEval/12.} The model generates a Rocq specification that is syntactically correct but semantically flawed. It attempts to formalize the first longest string requirement but introduces a logic error that forbids duplicates. This results in a specification that is logically unsatisfiable for certain valid inputs, illustrating the gap between syntax generation and semantic understanding.}
    \label{fig:semantic_flaw_case}
\end{figure}




\end{document}